\newcommand{\added}[1]{#1}
\newcommand{\deleted}[1]{}
\newcommand{\substituted}[2]{#2}
\newcommand{\commentaire}[1]{}
\newtheorem{mydef}{Definition}
\newtheorem{property}{Property}
\newtheorem{theorem}{Theorem}
\newtheorem{lemma}{Lemma}
\newtheorem{example}{Example}
\newenvironment{proof}{\textbf{Proof.}}{ $\square$ }
\title{Symbolic Computation of the Worst-case Execution Time of a Program\footnote{This is a preliminary version of the article ``Simbolic WCET Computation'' from the same authors, accepted for publication at ACM Transactions on Embedded Computing Systems}\\[.5cm]
  \large Technical Report } \author{Cl\'ement Ballabriga, Julien
  Forget, Giuseppe Lipari\\ Univ.Lille, CNRS, Centrale Lille UMR 9189
  - CRIStAL}
\date{}
\begin{document}

\maketitle

\begin{abstract}
  Parametric Worst-case execution time (WCET) analysis of a sequential
  program produces a formula that represents the worst-case execution
  time of the program, where parameters of the formula are
  user-defined parameters of the program (as loop bounds, values of
  inputs or internal variables, etc).

  In this paper we propose a novel methodology to compute the
  parametric WCET of a program. Unlike other algorithms in the
  literature, our method is not based on Integer Linear Programming
  (ILP). Instead, we follow an approach based on the notion of
  symbolic computation of WCET formulae. After explaining our
  methodology and proving its correctness, we present a set of
  experiments to compare our method against the state of the art. We
  show that our approach dominates other parametric analyses, and
  produces results that are very close to those produced by
  non-parametric ILP-based approaches, while keeping very good
  computing time.
\end{abstract}

\section{Introduction}
A real-time system is usually represented as a set of tasks. Tasks are
subject to timing constraints: typically, the execution of every
instance of a periodic real-time task must be completed before its
\emph{deadline}. In order to guarantee the respect of timing
constraints, first a \emph{worst-case execution time} (WCET) analysis is
performed off-line, which calculates an upper bound to the execution
time of each task. Then, this information is used to perform a
schedulability analysis and guarantee that every task will meet its
deadline.

In this paper, we focus on WCET analysis. In WCET analysis, first the
task code is analysed to model its set of possible execution
paths. Then, the impact of the hardware architecture is taken into
account: local effects (timing of basic blocks of code) and global
effects (impact of processor pipeline, caches, and in general
interactions between basic blocks). Finally, an upper bound to the
execution time is computed by calculating the worst-case path, taking
into account all effects. A popular technique for doing this, called
Implicit Path Enumeration Technique (IPET), is to encode the problem as
an Integer Linear Programming (ILP) problem that is then solved with
standard techniques~\cite{ipet}.

With traditional WCET analysis, if any of the program parameters is
changed, it is necessary to re-run the analysis. Also, it is difficult
to analyze the impact of different parameter values on the final WCET
estimate. For example, the developer may be want to know the impact of
the number of iterations of a certain loop on the WCET, the impact of
the cache size, etc. To answer these questions, it would be necessary to
run the analysis several times with different parameter values, which
could be a very time consuming process.

An alternative approach is to calculate directly a \emph{parametric
  WCET} formula instead of a constant value. If the parameter changes,
it is possible to recompute the WCET by simply substituting the
parameter value into the formula. Thus, it is possible to quickly
explore the parameters space, which may be very useful in guiding
developers at design time. Similarly, parametric WCET simplifies the
analysis process when third-party software is involved, since the
developer can provide a parametric WCET along with the component, that
can be adapted to the target system.

In addition, if the obtained formula is simple enough, it can be used to
efficiently implement an \emph{adaptive} real-time system. Indeed, many
system parameters are only known at run-time: loop bounds that depend on
input values, software and hardware state changes, operating system
interference, etc. With traditional WCET analysis, adaptive features
would rely on a pre-computed WCET table containing different WCET values
for different parameter values. Instead, with parametric WCET analysis,
we can compute off-line a WCET formula that depends on these parameters
and instantiate this formula on-line, at which point parameter values
become known. As a result, with low overhead, we obtain a tighter
estimate of the task's WCET and take better scheduling decisions. This
can for instance benefit energy-aware scheduling techniques based on
Dynamic Voltage and Frequency Scaling (DVFS)~\cite{parascale}.

Finally, large execution time values may happen only very rarely, for
instance for unlikely combinations of input data. By using parametric
WCET analysis, it is possible to design the system according to an
upper bound that is safe for the vast majority of executions of the
system, and then evaluate a parametric WCET formula at run-time to
trigger an alternate, less time-consuming computation when the formula
returns a value exceeding the safe bound (and thus remain under the
safe bound).

\textbf{Contribution.} In this paper, we propose a novel approach to
parametric WCET analysis based on \emph{symbolic computation} that
greatly improves upon the state of the art on parametric WCET. Unlike
the majority of existing WCET analysis algorithms, our methodology is
not based on ILP: instead, we follow an approach based on symbolic
computation of WCET formulae.

We start from a representation of the program as a \emph{Control-Flow
  Graph} where nodes of the graph are basic blocks of code (the notion
of CFG is recalled in Section~\ref{sub:ccfg}). We transform the CFG into
a \emph{Control-Flow Tree} (CFT) (Section \ref{sub:expressiontree}),
because a tree is more amenable to be transformed into arithmetic
(symbolic) formulae.  To represent global effects, CFT nodes are
annotated with \emph{context-sensitive annotations} (Section
\ref{sub:context-annotation}): these annotations encode restrictions on
the number of iterations of basic blocks when executed inside
loops. They may be considered as the equivalent of ILP constraints in
the IPET method~\cite{ipet}.  We then move to the core method for
generating a WCET formula.  We first
introduce the notion of \emph{Abstract WCET} (Section
\ref{sec:abstract}) and how to compute it starting from an annotated CFT
in the absence of parameters. Later, we introduce WCET parameters
(Section \ref{sec:parametric}) and we enunciate the rules for symbolic
computation and simplification of \emph{Abstract WCET formulae}.
Finally, in Section \ref{sec:eval} we present experimental data that
compare our approach with the state of the art algorithms. We show that
our algorithm produces results that are very close to those of
non-parametric ILP-based approaches, while keeping very good computing
time. We also show that simplified WCET formulae are very small, which
implies low memory and execution time overhead in case of on-line
formula evaluation. Finally, we show that our approach dominates other
parametric WCET analyses. This paper focuses on the generic framework
for symbolic WCET evaluation and only briefly \substituted{describes}{outlines} some applications
\added{in Section~\ref{sec:symb-ex}}. More complex applications
(e.g. data-cache analysis) are out of the scope of this paper and are
subject to future work.

\begin{figure}[tbp]
\centering
  \begin{tikzpicture}[scale=0.7, transform shape,node distance=0.6cm]
    \tikzset{vertex/.style = {shape=ellipse,draw,minimum size=1.5em}}
    \tikzset{vertex2/.style = {shape=rectangle,draw,minimum size=1.5em}}
    \tikzset{edge/.style = {->,> = latex'}}
	\node[vertex2, align=center] (cfg) {CFG};
	\node[vertex, align=center] (tb) [right= of cfg] {Tree\\Builder};
	\node[vertex2, align=center] (cft) [right= of tb] {CFT};
	\node[vertex, align=center] (comp) [right= of cft] {Abs. WCET\\computation};
	\node[vertex2, align=center] (awcet) [right= of comp] {Abs. WCET};
	\node[vertex, align=center] (apply) [right= of awcet] {Abs. WCET\\instantiate};
	\node[vertex2, align=center] (wcet) [right= of apply] {WCET};
	\node[vertex2, align=center] (param) [above= of apply] {Parameter\\values};
	\node[vertex, align=center] (ext) [above= of cft] {Extra-CFG\\analyses};

	\draw[edge] (cfg) to (tb);  
	\draw[edge] (tb) to (cft);  
	\draw[edge] (cft) to (comp);  
	\draw[edge] (comp) to (awcet);  
	\draw[edge] (awcet) to (apply);  
	\draw[edge] (apply) to (wcet);  
	\draw[edge] (param) to (apply);  
	\draw[edge] (ext) to (cft);  
  \end{tikzpicture}
  \caption{Symbolic WCET workflow}
  \label{pic:workflow}
\end{figure}
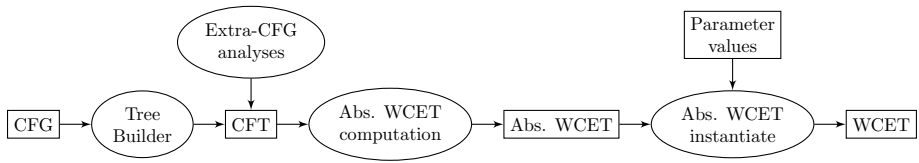


\section{Related works}
Various existing works suggest using symbolic methods in WCET
analysis. However, their goal differs from ours. For example,
\cite{symbolic0,symbolic1,symbolic2} use symbolic execution as a
method to reduce the duration of the WCET
analysis. In~\cite{symbolic3}, the authors use symbolic states to
model the effect of pipelines on the WCET. The objective of these
papers is not to produce a parametric WCET formula.

In~\cite{blackbox}, a technique is presented to perform a partial,
composable WCET analysis. This work addresses mostly the software and
hardware modeling that occurs before the WCET computation
proper. Results are presented for the instruction cache and branch
prediction analysis, and loop bounds estimation. However, no solution
is provided to perform the ILP computation parametrically.

Feautrier~\cite{feautrier} presented a method for parametric ILP
computation. The ILP solver presented in~\cite{feautrier} (called
\emph{PIPLib}), takes a parametrized ILP system as input, and produces a
\emph{quast} (quasi-affine selection tree). Once computed, this tree can
be evaluated for any valid parameter values, without having to re-run
the solver. However, this approach is computationally very
expensive. 
Experiments~\cite{mpa} have shown that PIPLib does not scale well
when applied in the context of IPET. The MPA (Minimum Propagation
Algorithm)~\cite{mpa} attempts to address these
shortcomings. MPA takes as
input the results of the software and hardware modeling analysis, and
produces directly a parametric WCET formula. Compared with MPA, our
method is significantly tighter because it takes into account various
context-sensitive software and hardware timing effects.

In the past, many tree-based WCET computation methods have been
presented~\cite{ets}. In \cite{scope}, the authors suggest a method to
compute parametric WCETs using a tree-based approach. Our approach is
also based on trees, but unlike~\cite{scope} it can work directly on the
binary code. Furthermore, our method can model timing effects in a more
generic and accurate way thanks to context annotations (Section
\ref{sub:context-annotation}).

ParaScale~\cite{parascale} is an approach to exploit variability in
execution time to save energy. By statically analyzing the tasks, a
parametric WCET formula is given for loops in terms of the loop
iteration count. At run-time, before entering a loop, the formula is
evaluated and the system dynamically scales the voltage and frequency
of the processor. In comparison, the parameters in our method are not
limited to loop bounds.

Finally, note that our method provides an alternative to the
time-consuming ILP solving, thus our method is competitive even
compared to non-parametric WCET analysis based on ILP.


\section{Control-Flow Graph}
\label{sub:ccfg}

In this section we recall the definition of Control-Flow Graphs (CFG),
the input model in our approach. The CFG is extracted from the binary
code of the task under analysis.

\begin{mydef}
  \label{def:cfg}
  A Control Flow Graph (CFG) is a directed graph $G=<\mathcal{B},
  \mathcal{E}>$. The set of vertices $\mathcal{B}$ corresponds to the
  set of basic blocks of the program represented by the CFG. A directed
  edge $(b_i, b_j)\in\mathcal{E}$ (where
  $\mathcal{E}\subseteq\mathcal{B}\times\mathcal{B}$), represents a
  valid succession of two basic blocks in the program execution. We
  denote by $\wcet{b}$ the worst-case execution time (WCET) of block
  $b$.
\end{mydef}

An \emph{entry node} is a node without incoming edges, and an \emph{exit
  node} is a node without outgoing edges. We assume, without loss of
generality, that a CFG has one single entry node and one single exit
node (otherwise, it is always possible to add fictive entry and exit
nodes with the corresponding edges). We also assume that each node is
reachable from the entry node, and that the exit node is reachable from
any node.

An \emph{execution path} is a sequence of nodes (basic blocks): $p.b$
denotes a path whose last node is $b$;
$p_1 \cat p_2$ denotes the path consisting of path $p_1$ followed by path
$p_2$. By abuse of notation, we also denote $b$ the path
consisting only of node $b$. $\epsilon$ denotes the empty path.

\begin{mydef}
  Let $G=<\mathcal{B}, \mathcal{E}>$ be a CFG. Let $p=b_1\ldots b_k$ an
  execution path. We say that $p$ is a \emph{valid path} of $G$ (or
  simply a path of $G$) iff:

  \[ \forall i\in \{1, \ldots, k\},
  b_i \in \mathcal{B} \\\wedge\forall i\in \{1, \ldots, k-1\}, (b_i,
  b_{i+1}) \in \mathcal{E}\]
 
  If $b_1$ is an entry node of $G$ and $b_n$ is an exit node of $G$, then $p$
  represents a complete execution of the program represented by $G$.  
\end{mydef}

\begin{mydef}
  Let $p=b_1\ldots b_k$ an execution path. We have: $\wcet{p}\equiv \sum_{i=1}^k \wcet{b_k}$
\end{mydef}

We introduce now a set of additional definitions concerning the CFG
topology that will allow us to manipulate the CFG in the following
sections.

\begin{mydef}
  Let $G=<\mathcal{B},\mathcal{E}>$. Let $b_i,b_j,h\in\mathcal{B}$ and
  let $h$ be a loop header (see definition below).
  \begin{itemize}
  \item We say that $b_i$ is a \emph{predecessor} of
     $b_j$, and denote $b_i\pred b_j$, iff $(b_i, b_j)\in\mathcal{E}$;
   \item We say that $b_i$ \emph{dominates} $b_j$, and denote $b_i\dom
     b_j$, iff all paths from the entry node to $b_j$ go through $b_i$;
   \item $b_k$ is the \emph{immediate dominator} of $b_i$ 
     iff $b_k\dom b_i$, $b_k \ne b_i$ and there exists no
	$b_{k'}$ such that $b_{k'} \ne b_i$, $b_{k'} \ne b_k$, $b_{k'}\dom b_i$, $b_k\dom b_{k'}$;
   \item $h$ is a \emph{loop header} if it has at least one
     predecessor $b_i$ such that $h \dom b_i$. We denote
     $l_h$ the loop associated to header $h$;
   \item An edge $(b_i,h)$ such that $h\dom b_i$ is called a
     \emph{back-edge} of $l_h$;
  \item An edge $(b_i,h)$ that is not a back-edge
    is called an \emph{entry-edge} of $l_h$.
  \item The \emph{body} of the loop of header $h$, denoted $\body{h}$,
    is the set of all nodes $b_i$ such that $b_i$ belongs to a path
    $P$, where $P$ starts with $h$, ends with a back-edge of $l_h$ and
	does not go through any entry-edges of $l_h$.
  \item An edge $(b_i,b_j)$ such that $b_i\in\body{h}$ and
    $b_j\not\in\body{h}$ is called an \emph{exit-edge} of $l_h$;
  \item An execution path of a loop $l_h$ is a path $p=h.b_1\ldots
	  b_n$, where there exists an exit edge $(b_n, b_x)$ of $l_h$.
    Note that $b_1$, $b_n$, $h$ may
    actually not be distinct. The number of iterations of $l_h$ in $p$
    corresponds to the number of back-edges in $p$. The maximum number
    of iterations of the loop $l_h$, denoted by $x_h$, is the maximum of
    the number of iterations of any execution path of
    $l_h$.
  \item Let $l_h$, $l_{h'}$ be two loops of $G$. We say that $l_h$
    \emph{contains} $l_{h'}$ and denote $l_{h'}\sqsubseteq l_h$ iff
    $h'\in\body{h}$;
  \item The loop $l_h$ \emph{immediately contains} $b_i$ iff
    $b_i\in\body{h}$ and there exists no loop $l_{h'}\neq l_h$ such
    that $b_i\in\body{h'}$ and $l_{h'}\sqsubseteq l_h$.
  \item The set of loops of graph $G$ is denoted $L_G$.
  \end{itemize}
\end{mydef}

We define two additional loops, that are not actually
part of the represented program:
\begin{itemize}
\item $\top$ is such that for all $l\in L_G$,
  $l\sqsubseteq \top$. In other words, $\top$ is a fictive loop
    whose body is the whole CFG ($\body{\top}=G$);
\item $\bot$ is such that for all $l\in L_G$,  $\bot\sqsubseteq
  l$. In other words, $\bot$ is a fictive empty loop ($\body{\bot}=\emptyset$).
\end{itemize}

\begin{property}
  $(L'_{G}=L_G\cup \{\top,\bot\},\sqsubseteq)$ is a lattice.
\end{property}
\begin{proof}
  Trivial due to the definition of $\top$ and $\bot$.
\end{proof}

In the following:
\begin{itemize}
  \item $\sqcup : L'_G \times L'_G \rightarrow L'_G$ denotes the least upper
    bound, i.e. $l_1 \sqcup l_2$ is the least element of
    $\{l \in L'_G | l_1 \sqsubseteq l \wedge l_2 \sqsubseteq l\}$.
  \item $\sqcap : L'_G \times L'_G \rightarrow L'_G$ denotes the greatest
    lower bound, i.e. $l_1 \sqcap l_2$ is the greatest element of
     $\{l \in L'_G | l \sqsubseteq l_1 \wedge l \sqsubseteq
    l_2\}$.
\end{itemize}

Figure \ref{fig:loops} shows of a simple CFG. Nodes
$b_1$ and $b_2$ are loop headers. Loop $l_{b_1}$ contains $b_1, b_2, b_3,
b_4$, $b_6$, but it immediately contains only $b_1$ and $b_3$. $(b_3,
b_1)$ is a back-edge and $(b_1, b_5)$ is an exit-edge for loop
$l_{b_1}$. Loop $l_{b_2}$ is contained within loop $l_{b_1}$. $b_1$ dominates
all the other nodes of the CFG. 
$b_1$ is the immediate dominator of $b_3$.

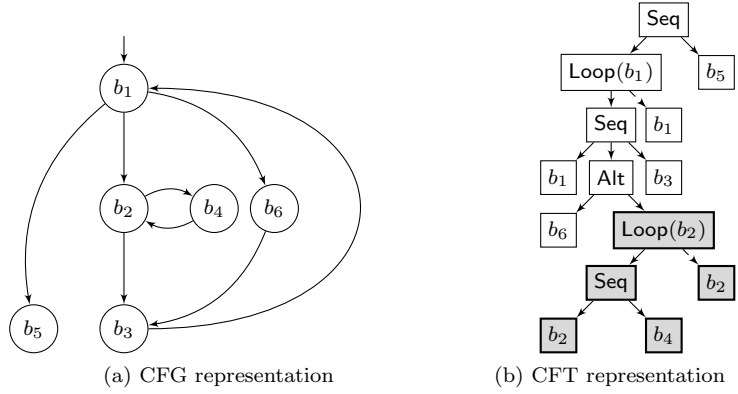
\begin{figure}[tbp]
\footnotesize
  \begin{center}
    \subfloat[CFG representation]{
  \begin{tikzpicture}[scale=0.8]
    \tikzset{vertex/.style = {shape=circle,draw,minimum size=1.5em}}
    \tikzset{edge/.style = {->,> = latex'}}
    \node (i) at (0,6) {};
    \node[vertex] (a) at  (0,5) {$b_1$};
	\node[vertex] (b) at  (0,3) {$b_2$};
    \node[vertex] (c) at  (0,1) {$b_3$};
    \node[vertex] (d) at  (1.5,3) {$b_4$};
    \node[vertex] (e) at  (-1.5,1) {$b_5$};
    \node[vertex] (f) at  (2.5,3) {$b_6$};
    \draw[edge] (i) to (a);
    \draw[edge] (a) to (b);
    \draw[edge] (b) to (c);
    \draw[edge, bend left] (b) to (d);
    \draw[edge, bend left] (d) to (b);
    \draw[edge, bend right=90,looseness=3] (c) to (a);
    \draw[edge, bend right] (a) to (e);
    \draw[edge, bend left] (a) to (f);
	\draw[edge, bend left] (f) to (c);

  \end{tikzpicture}
  \label{fig:loops}
}
\subfloat[CFT representation]{
\footnotesize
\begin{tikzpicture}[scale=0.7]
    \tikzset{vertex/.style = {shape=rectangle,draw,minimum size=1.5em}}
    \tikzset{vertex2/.style = {shape=rectangle,draw,thick,fill=gray!30,minimum size=1.5em}}
    \tikzset{edge/.style = {->,> = latex'}}
    \tikzset{edge_exit/.style = {->,> = latex', dashed}}
	\node[vertex] (m) at  (1, 3) {$\Seq$};
	\node[vertex] (n) at  (0, 2) {$\MyLoop(b_1)$};
        \node[vertex] (o) at  (1, 1) {$b_1$};
	\node[vertex] (z) at  (2, 2) {$b_5$};
        \node[vertex] (a) at  (-1,0) {$b_1$};
	\node[vertex] (b) at  (0,0) {$\Alt$};
	\node[vertex] (d) at  (1,0) {$b_3$};
        \node[vertex] (c) at  (0,1) {$\Seq$};
	\node[vertex] (f) at (-1,-1) {$b_6$};
        \node[vertex2] (g) at (1, -1) {$\MyLoop(b_2)$};
	\node[vertex2] (h) at (2, -2) {$b_2$};
	\node[vertex2] (i) at (0, -2) {$\Seq$};
	\node[vertex2] (j) at (-1, -3) {$b_2$};
	\node[vertex2] (k) at (1, -3) {$b_4$};
        \node (zz1) at (3,-2) {};
        \node (zz2) at (-3,-2) {};
	\draw[edge] (c) to (a);
	\draw[edge] (c) to (b);
	\draw[edge] (c) to (d);
	\draw[edge] (b) to (g);
	\draw[edge] (b) to (f);
        \draw[edge_exit] (g) to (h);
	\draw[edge] (g) to (i);
	\draw[edge] (i) to (j);
	\draw[edge] (i) to (k);
	\draw[edge] (m) to (n);
        \draw[edge_exit] (n) to (o);
	\draw[edge] (m) to (z);
	\draw[edge] (n) to (c);
  \end{tikzpicture}
  \label{fig:treefull}
}
\end{center}
\caption{A program with two nested loops.}
\end{figure}


\section{Control-Flow Tree}
\label{sub:expressiontree}

We propose to translate the CFG into a \emph{Control-Flow Tree}, which
also represents the possible execution paths of a program but, thanks to
its tree structure, is more prone to recursive WCET analysis than a
CFG. A Control-flow Tree is similar to Abstract Syntax Trees used in
programming languages compilation, except that it represents the
structure of binary code. As such, it will be quite natural to represent
the WCET of a CFT as an arithmetic expression (see Section~\ref{sec:parametric}).

\subsection{Definition}

The set of Control-flow Trees $\mathcal{T}$ is defined inductively as
follows:

\begin{mydef}
  Let $n,m\in\mathbb{N}^*$, $t_1$, $\ldots$, $t_n$,
  $\in\mathcal{T}^n$.  A control-flow tree $t\in\mathcal{T}$ is one
  of:
  \begin{itemize}
  \item $\Leaf(b)$, which represents the execution of basic block
    $b\in\mathcal{B}$; 
  \item $\Alt(t_1,\ldots,t_n)$, which represents an alternative
    between the execution of trees $t_1$, $\ldots$, $t_n$;
  \item $\MyLoop(h, t_1, n, t_2)$, which represents a loop with header $h$,
    that repeats the execution of tree $t_1$, with a maximum number
    of iterations $n$, and exits from the loop executing the
    tree $t_2$; 
  \item $\Seq(t_1,\ldots,t_n)$, which represents a sequential
    execution of trees $t_1$, $\ldots$, $t_n$.
  \end{itemize}
\end{mydef}

As an example, Figure~\ref{fig:treefull} shows the tree corresponding to
the CFG of Figure~\ref{fig:loops}. In the following sections, we will
use this example to describe the steps of the conversion
from CFG to CFT. Our definition of loops considers that we
repeat a sub-tree and then execute a different sub-tree when finishing
the loop. This enables to represent a wide variety of loops: $for$,
$while$, $do...while$, etc.

\subsection{From CFG to Control-flow Tree}

Algorithm~\ref{alg:dag} translates a loop of
the CFG into a \emph{Directed Acyclic Graph} (DAG) that represents the
loop body. Algorithm~\ref{alg:processing} is the recursive
procedure that generates the complete control-flow tree. It relies on
Algorithm~\ref{alg:dag} to process the CFG loops.

Our control-flow tree construction method works only for CFGs that
contain no irreducible loops (i.e. loops with multiples
entries). In the general case, it is possible to transform CFGs
with irreducible loops 
by using \emph{node splitting}~\cite{nodesplitting}
algorithms. In~\cite{decompil} the authors show that it is possible
to detect the set of irreducible loops in a CFG in
$O(n^2)$. \deleted{Furthermore, experiments on popular programs (such as
\emph{sendmail}, \emph{explorer.exe}, or
\emph{samba}) show that the percentage of functions
containing irreducible loops is low (worst-case is $11\%$ for
\emph{sendmail}, best-case is $0.15\%$ for
\emph{explorer.exe}).} While the complexity of the node-splitting
algorithm is not reported, the algorithm is
meant to be executed only on irreducible loops, which usually
constitute a small part of the analysed program. 
\deleted{In this paper, we assume that an existing node-splitting algorithm
is executed on the CFG before our algorithm.}

\subsubsection{Loop to DAG (Algorithm~\ref{alg:dag})}

\begin{figure}[tbp]
\footnotesize
  \begin{center}
    \subfloat[DAG for loop $l_{b_1}$]{
      \begin{tikzpicture}[scale=0.8]
        \tikzset{vertex/.style = {shape=circle,draw,minimum size=1.5em}}
        \tikzset{edge/.style = {->,> = latex'}} 
        \node (i) at (0,4) {};
        \node[vertex] (a) at (0,3) {$b_1$}; 
        \node[vertex] (b) at (0,1.5) {$L_{b_2}$}; 
        \node[vertex] (c) at (0,0) {$b_3$}; 
        \node[vertex] (d) at (-1.5,3) {$exit$}; 
        \node[vertex] (e) at (-1.5,0) {$next$}; 
        \node[vertex] (f) at (1.5,1.5) {$b_6$}; 
        \draw[edge] (i) to (a); 
        \draw[edge] (a) to (b); 
        \draw[edge] (b) to (c);
        \draw[edge] (c) to (e); 
        \draw[edge] (a) to (d); 
        \draw[edge, bend left] (a) to (f); 
        \draw[edge, bend left] (f) to (c);
      \end{tikzpicture}
      \label{fig:dag1}    
    }
    \hspace{1cm}
    \subfloat[DAG for loop $l_{b_2}$]{
      \begin{tikzpicture}[scale=0.8]
        \tikzset{vertex/.style = {shape=circle,draw,minimum size=1.5em}}
        \tikzset{edge/.style = {->,> = latex'}} 
        \node (i) at (0,2.5) {};
        \node[vertex] (a) at (0,1.5) {$b_2$}; 
        \node[vertex] (b) at (0,0) {$b_4$}; 
        \node[vertex] (c) at (1.5,1.5) {$exit$}; 
        \node[vertex] (d) at (-1.5,0) {$next$}; 
        \draw[edge] (a) to (b); 
        \draw[edge] (a) to (c); 
        \draw[edge] (b) to (d); 
        \draw[edge] (i) to (a);
      \end{tikzpicture}
      \label{fig:dag2}
    }
    \hspace{1cm}
    \subfloat[CFT for the body of loop $l_{b_1}$]{
      \begin{tikzpicture}[scale=0.8]
        \tikzset{vertex/.style = {shape=rectangle,draw,minimum size=1.5em}}
        \tikzset{edge/.style = {->,> = latex'}}
        \node[vertex] (a) at  (-1,0) {$b_1$};
	\node[vertex] (b) at  (0,0) {$Alt$};
	\node[vertex] (d) at  (1,0) {$b_3$};
        \node[vertex] (c) at  (0,1.5) {$Seq$};
	\node[vertex] (e) at (-1,-1.5) {$b_6$};
	\node[vertex] (f) at (1,-1.5) {$L_{b_2}$};
	\draw[edge] (c) to (a);
	\draw[edge] (c) to (b);
	\draw[edge] (c) to (d);
	\draw[edge] (b) to (e);
	\draw[edge] (b) to (f);
      \end{tikzpicture}
      \label{fig:tree1}
    }

    \caption{From loop to DAG and CFT}
  \end{center}
\end{figure}
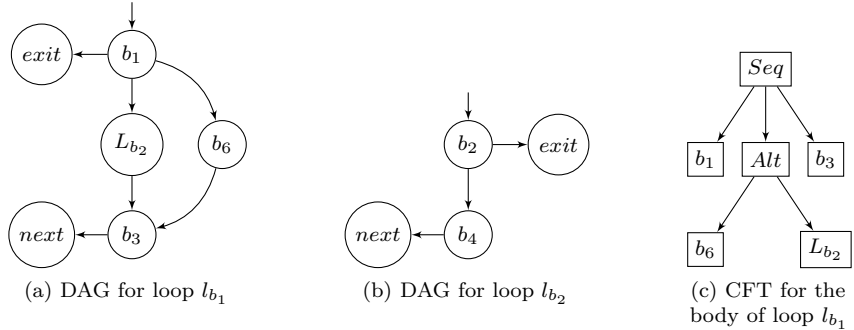

The DAG produced for a loop $l_h$ represents its body. In this DAG, inner
loops are replaced by \emph{hierarchical nodes}, which themselves
correspond to separate DAGs. For instance, Figure~\ref{fig:dag1} shows
the DAG produced for loop $l_{b_1}$ (the construction steps and the
meaning of nodes $exit$ and $next$ are detailed below). $L_{b_2}$ is a
hierarchical node representing loop $l_{b_2}$. The DAG produced for loop
$l_{b_2}$ is shown in Figure~\ref{fig:dag2}. In the remainder of this
section, we use the example of Figure~\ref{fig:dag1} to illustrate
Algorithm~\ref{alg:dag}. 

Algorithm~\ref{alg:dag} constructs the DAG corresponding to a loop
$l_h$. At line 2 the algorithm adds all nodes immediately contained in
$l_h$ to the DAG nodes. Any edge in the CFG between these nodes is
added to the DAG edges (line \substituted{5}{3}). In our example, this
corresponds to nodes $b_1$, $b_3$, $b_6$ and to edges $(b_1,b_6)$ and
$(b_6,b_3$).

Virtual \emph{exit} and \emph{next} nodes are created to represent,
respectively, transferring control to the next
iteration and exiting the loop (lines 4 and 5). For any back-edge $(b_i, b_j)$ of $l_h$ in
the CFG, we add a corresponding edge in the DAG, from $b_i$ to the
virtual next node (line 6). Similarly, for any exit edge $(b_i,b_j)$ of
$l_h$ in the CFG we add a corresponding edge in the DAG, from $b_i$ to
the virtual exit node (line 7).  In our example, we have an edge
$(b_1, \mathsf{exit})$ and an edge $(b_3, \mathsf{next})$.

Inner loops are handled by the \emph{for} in lines 9--14. For each loop
$l_{h^{\prime}}$ directly in $l_h$, we create a \emph{hierarchical node}
$L_{h^{\prime}}$. For each exit edge $(b_i, b_j)$ of $l_h^{\prime}$, an
edge $(L_{h^{\prime}}, b_j)$ is created (line 11) and for each entry
edge $(b_i, b_j)$, an edge $(b_i, L_{h^{\prime}})$ is created (line
12). In our example, a hierarchical node $L_{b_2}$ is created to
represent the loop $l_{b_2}$ (which is directly in loop $l_{b_1}$) and
we also create edges $(b_1,L_{b_2})$ and $(L_{b_2},b_3)$.

We assumed in Section~\ref{sub:ccfg} that the whole CFG is the body of a
(fictive) loop $\top$. Therefore, the whole CFG can also be transformed
into a DAG using Algorithm~\ref{alg:dag}. It produces a hierarchy of
DAGs corresponding to the CFG containing only reducible loops. 

Note that similar algorithms have been proposed
in~\cite{yakdan2015gotos}. However, the most notable difference between
the work presented in~\cite{yakdan2015gotos} and our approach, is that
while our transformation may not preserve the semantics of the program,
we guarantee that it does not decrease the execution time. On the
contrary, the method proposed in~\cite{yakdan2015gotos} guarantees the
preservation of the program semantics, but not the execution time.

\begin{algorithm}[tb]
\caption{Loop to DAG}.
\label{alg:dag}
\footnotesize
\begin{algorithmic}[1]
	\Function{DAG}{$G=<\mathcal{B}, \mathcal{E}>,l_h \in L_G \cup \top$}
	\State $\mathcal{B}_d = \{n | l_h\text{ immediately contains }n\}$
	\State $\mathcal{E}_d = \{(b_i,b_j) | b_i \in \mathcal{B}_d
        \wedge b_j \in \mathcal{B}_d \wedge (b_i, b_j) \in \mathcal{E}\}$
	\State $n \gets$ new virtual node (next)
	\State $e \gets$ new virtual node (exit)
	\State $sn \gets \{(b_i, n) | \exists b_j$, $ (b_i, b_j)$ back-edge of $l_h\}$
	\State $se \gets \{(b_i, e) | \exists b_j$, $ (b_i, b_j)$ exit-edge of $l_h\}$
	\State $(v, i, o) \gets (\emptyset, \emptyset, \emptyset)$
		\For{each loop $l_{h^{\prime}}$ directly in $l_h$}
			\State $L_{h^{\prime}} \gets$ new hierarchical node
			\State $i_{h^{\prime}} \gets \{(L_{h^{\prime}}, b_j) | \exists b_j$, $ (b_i, b_j)$ exit-edge of 
\substituted{$L_{h^{\prime}}$}{$l_{h^{\prime}}$} $\}$
			\State $o_{h^{\prime}} \gets \{(b_i, L_{h^{\prime}}) | \exists b_i$, $ (b_i, b_j)$ entry-edge of 
\substituted{$L_{h^{\prime}}$}{$l_{h^{\prime}}$} $\}$
			\State $(v, i, o) \gets (v \cup \{L_{h^{\prime}}\},i \cup i_h^{\prime},o \cup o_h^{\prime})$
		\EndFor
	\State $d \gets <\mathcal{B}_d \cup v\ \cup\{n, e\},
        \mathcal{E}_d \cup i \cup o \cup \{sn, se\}>$
	\State \Return $(d, n, e)$
	\EndFunction
\end{algorithmic}
\end{algorithm}

\subsubsection{Tree construction (Algorithm~\ref{alg:processing})}

First, we introduce the notion of \emph{forced passage nodes}, upon which the recursive structure
of our algorithm relies. Intuitively, these correspond to the set of nodes that appear in every
path to the end node of a DAG.
\begin{mydef}
  Let $\mathcal{D}$ a DAG. Let \textnormal{start} the start node and
  \textnormal{end} an exit node of $\mathcal{D}$. The set of
  \emph{forced passage nodes} of $\mathcal{D}$ towards $e$,
    denoted $forced(D,e)$, is defined as:
  \[ forced(D,e)=\{n \in \mathcal{D} | \mathsf{start}~\dom~n \wedge n~\dom~\mathsf{end}\} \setminus \mathsf{start}\]
\end{mydef}

The function \Pp\ described by Algorithm~\ref{alg:processing} builds
recursively a control-flow tree from a DAG. Notice that this function
takes as arguments a $\mathsf{start}$ node and an $\mathsf{end}$
node. This is because in some cases it is useful to build the
control-flow tree representing paths between two arbitrary nodes that
are different from the entry and exit nodes of the DAG (see the
different recursive calls in the algorithm for details).

Function \Pp\ returns a $\Seq$ node. The list of children for this
$\Seq$ node is contained in variable $\mathsf{ch}$. We will call this
$\Seq$ node the \emph{current} sequential node.

We denote as $\mathcal{N}$ the set of forced passage nodes towards
$\mathsf{end}$. In the while loop (lines 7 to 19), the algorithm goes through
$\mathcal{N}$ in reverse dominance order (i.e from the $\mathsf{end}$ to the
$\mathsf{start}$). Since we must pass through all nodes in $\mathcal{N}$, it is
clear that each node in $\mathcal{N}$ must be a leaf child of the
current sequential node (line 18). 
As an example, consider the tree obtained from the example of
Figure~\ref{fig:dag1}, which is represented in Figure~\ref{fig:tree1}. During
each iteration of loop $l_{b_1}$, we are forced to pass through $b_1$ and $b_3$, so
$\mathcal{N}=\{b_1,b_3\}$.  Therefore, the control-flow tree has a
$\Seq$ node as root, with children $b_1$ and
$b_3$, as well as an $\Alt$ node whose construction is explained below. 

If there exists multiple possible paths between two adjacent
forced passage nodes (line 10) then an $\Alt$ node must
be added to the $\mathsf{ch}$ list. We construct a tree for each
possible predecessor by recursively calling \Pp, and the
$\Alt$ node contains these trees as children (lines 13 to
15). In our example, the node $b_3$ has two
predecessors, $L_{b_2}$ and $b_6$.  The control-flow trees corresponding
to these two predecessors are respectively $\Leaf(L_{b_2})$ and
$\Leaf(b_6)$.

In lines 20 to 25, the algorithm deals with inner loops. Inner loops
have previously been added to the $\mathsf{ch}$ list as hierarchical
$\Leaf$ nodes. Here, they are replaced by control-flow trees
representing these loops. Such a tree is composed of two parts, in
sequence. The first part is the loop body (line 22), representing all
the iterations of the loop. The second part is the loop exit
$\mathsf{ex}$ (line 23), which represents the paths from the last
execution of the loop header, to the loop exit.  For instance, in
Figure~\ref{fig:treefull} the sub-tree depicted in gray replaces the
hierarchical node $L_{b_2}$. The left part of this sub-tree corresponds
to the body of loop $l_{b_2}$,
while the right part (below the dashed edge) corresponds to the exit of
loop $l_{b_2}$.

We note that in the algorithm, sometimes a single basic block can be
represented by several $\Leaf$ nodes. When such duplication occurs, we
rename the duplicated basic block(s) such that each $\Leaf$ node has an
unique label. This guarantees that two different paths in the tree are
always identified by different sequences of $\Leaf$ nodes.

\begin{algorithm}
\caption{DAG to control-flow tree}
\label{alg:processing}
\footnotesize
\begin{multicols}{2}
\begin{algorithmic}[1]
  \Function{\Pp}{$\mathcal{D}, \mathsf{start}, \mathsf{end}$}
    \State $\mathsf{ch} \gets \emptyset$
    \State $\mathcal{N} \gets$ $forced(\mathcal{D},end)$
    \If{$\mathsf{start}$ has no predecessors}
      \State $\mathsf{ch} \gets \{\Leaf(\mathsf{start})\}$
    \EndIf
    \While{$\mathcal{N} \ne \emptyset$}
      \State Pick $c$ from $\mathcal{N}$ such that $\forall n \in \mathcal{N}, n~\dom~c$
      \State $\mathcal{N} \gets \mathcal{N} \setminus c$
      \If{$c$ has at least 2 predecessors}
         \State $\mathsf{br} \gets \emptyset$
	 \State $ncd \gets$ imm. dominator of $c$ in $\mathcal{N}$
	 \For{$p$ in $predecessors(c)$}
	   \State $\mathsf{br} \gets \mathsf{br} \cup \Pp(\mathcal{D}, ncd, p)$
	 \EndFor
	 \State $\mathsf{ch} \gets \mathsf{ch} \cup \Alt(\mathsf{br})$
      \EndIf
      \State $\mathsf{ch} \gets \mathsf{ch} \cup \Leaf(c)$
    \EndWhile
    \For{all $\Leaf(c) \in \mathsf{ch}$, $c$ representing $l_h$}
      \State $(\mathcal{D}^{\prime}, n, e) \gets \mathsf{DAG}(l_h)$
	  \State $\mathsf{bd} \gets \Pp{\mathcal{D}^{\prime}}, h, n)$
	  \State $\mathsf{ex} \gets \Pp{\mathcal{D}^{\prime}}, h, e)$
      \State Replace $\Leaf(c)$ by $\MyLoop(h, \mathsf{bd}, x_h, \mathsf{ex})$ 
    \EndFor
    \State \Return \Seq($\mathsf{ch}$)
  \EndFunction
\end{algorithmic}
\end{multicols}
\end{algorithm}


\subsection{Execution paths in CFG and Control-flow Tree}

We will now establish a correspondence between CFG execution paths and
tree execution paths.  This subsection contains the general idea and
definitions. For a complete proof, see Appendix~\ref{app:CFG-CFT}.

First, we denote $\gpaths{G, e}$ the function that, given a graph $G$
and a node $e$, returns the set of execution paths
$\{p_1, \ldots, p_k\}$ from the graph entry to the node $e$.

Second, a \emph{tree execution path} is defined as a sequence of leaf
nodes of the tree. We use the same notation for paths in the CFG and for
paths in the tree, with the obvious correspondence between leaf nodes
and basic blocks. The function $\tpaths{t}$ returns the set of tree
execution paths of control-flow tree $t$. It is defined as follows:

\begin{mydef}
  Let $t$ be a Control-flow tree. The set of feasible execution paths of
  $t$, denoted $\tpaths{t}$, is defined inductively as follows:

  \small
  \begin{align*}
    \tpaths{\Leaf(b)}&=\{b\}\\
    \tpaths{\Seq(t_1,\ldots,t_n)}&=\{p|\exists
    p_1\in\tpaths{t_1},\ldots,\exists p_n\in\tpaths{t_n},p=p_1\cat \ldots \cat p_n\}\\
    \tpaths{\MyLoop(h,t_b,n,t_e)}&=\{p|\exists p_1,\ldots,p_n\in\tpaths{t_b},\exists p_e \in \tpaths{t_e},p=p_1 \cat \ldots \cat p_n \cat p_e\}\\
    \tpaths{\Alt(t_1,\ldots,t_n)}&=\bigcup_{1\leq i\leq n} \tpaths{t_i}
  \end{align*}
\end{mydef}

Let us denote $\mathcal{D}_s$ and $\mathcal{D}_e$ respectively the
\emph{start} and \emph{exit} nodes of DAG $D$. Let $G_e$ denote the exit
node of $G$.  The following theorem states the correctness of our
translation from a CFG to a Control-flow Tree: any execution path in the
CFG is also an execution path in the corresponding Control-flow
Tree. However, some paths that are valid in the tree may not be valid in
the CFG, therefore, the two representations are not equivalent. Still,
this is \emph{safe}, since the presence of additional paths in the CFT
can only lead to an \emph{over-approximation} of the WCET.

\begin{theorem}
  \label{th:path-validity}
  Let $G$ be a CFG. Let $\mathcal{D}=DAG(G,\top)$ and let
  $t=\Pp(\mathcal{D}, \mathcal{D}_s, \mathcal{D}_e)$. We have:
  \[ \gpaths{G,G_e} \subseteq \tpaths{t}\]
\end{theorem}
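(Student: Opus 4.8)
The natural approach is structural induction on the recursive structure of Algorithm~\ref{alg:processing}, mirrored by induction on the loop-nesting hierarchy produced by Algorithm~\ref{alg:dag}. The statement to strengthen for the induction is: for any DAG $\mathcal{D}$ in the hierarchy (including the top-level one for $\top$), and any nodes $\mathsf{start},\mathsf{end}$ of $\mathcal{D}$, every path from $\mathsf{start}$ to $\mathsf{end}$ in $\mathcal{D}$ — where a step into a hierarchical node $L_{h'}$ is "expanded" into an actual CFG sub-path through $l_{h'}$'s body and exit — corresponds to a tree execution path of $\Pp(\mathcal{D},\mathsf{start},\mathsf{end})$. Since $\body{\top}=G$ and $\top$'s DAG has $G$'s entry and exit as its start/exit nodes, the top-level instance of this claim gives exactly $\gpaths{G,G_e}\subseteq\tpaths{t}$.

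First I would set up the correspondence between a CFG path to $G_e$ and a path in the top-level DAG: a CFG path enters and leaves inner loops, and each maximal excursion inside a loop $l_h$ is replaced by a single traversal of the hierarchical node $L_h$, by construction of Algorithm~\ref{alg:dag} (entry-edges become $(b_i,L_{h'})$, exit-edges become $(L_{h'},b_j)$). So a CFG path projects onto a DAG path, and conversely each DAG path "lifts" back. Then I would analyze $\Pp(\mathcal{D},\mathsf{start},\mathsf{end})$: it returns $\Seq(\mathsf{ch})$, where $\mathsf{ch}$ interleaves (i) the optional leading $\Leaf(\mathsf{start})$, (ii) the forced-passage nodes $\mathcal{N}=forced(\mathcal{D},\mathsf{end})$ in dominance order, each a $\Leaf$, (iii) $\Alt$ nodes built by recursive calls $\Pp(\mathcal{D},ncd,p)$ over predecessors $p$ of a forced node $c$, and (iv) loop sub-trees $\MyLoop(h,\mathsf{bd},x_h,\mathsf{ex})$ replacing hierarchical leaves. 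The key observations are: any DAG path from $\mathsf{start}$ to $\mathsf{end}$ must pass through every node of $\mathcal{N}$ in exactly that order (definition of dominator); between two consecutive forced nodes it must pick one predecessor branch, handled by the $\Alt$; and within that branch the recursion applies. For a hierarchical leaf $L_h$ encountered along the path, the lifted CFG sub-path is a full execution of $l_h$: some number $k\le x_h$ of body iterations followed by an exit path — which is precisely the shape of $\tpaths{\MyLoop(h,\mathsf{bd},x_h,\mathsf{ex})}$, using the inductive hypothesis on the smaller DAG $\mathcal{D}'=\mathsf{DAG}(l_h)$ for both $\mathsf{bd}=\Pp(\mathcal{D}',h,n)$ and $\mathsf{ex}=\Pp(\mathcal{D}',h,e)$ (body = paths $h\leadsto\text{next}$, exit = paths $h\leadsto\text{exit}$). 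Assembling the pieces and matching concatenation with the $\Seq$ semantics of $\tpaths{}$ closes the step.

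The main obstacle is the bookkeeping around the recursive calls $\Pp(\mathcal{D}, ncd, p)$ inside the $\Alt$: one must check that $ncd$ (the immediate dominator of $c$ in $\mathcal{N}$) is exactly the right "previous anchor", so that the segments of a DAG path between consecutive forced nodes decompose cleanly and their tree-paths concatenate in the correct order without gaps or overlaps — i.e. that the forced-passage decomposition really partitions every $\mathsf{start}$-to-$\mathsf{end}$ path. This requires a careful argument that $forced(\mathcal{D},\mathsf{end})$ is totally ordered by $\dom$ and that between successive forced nodes the DAG restricted to the relevant region is "single-entry/single-exit", so the recursion is well-founded and exhaustive. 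A secondary subtlety is the leaf-renaming remark: duplicated basic blocks get fresh labels, so one must read the inclusion $\gpaths{G,G_e}\subseteq\tpaths{t}$ modulo this renaming (identifying a renamed leaf with its original block), but this is purely notational. The converse non-inclusion (tree has extra paths) is not needed and is only remarked on for safety of the WCET over-approximation.
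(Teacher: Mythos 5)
Your plan is correct and follows essentially the same route as the paper's proof in Appendix~\ref{app:CFG-CFT}: the paper formalizes your ``expanded DAG path'' notion as $\dpaths{\cdot}$, proves $\dpaths{\mathcal{D},\mathcal{D}_e}\subseteq\tpaths{t}$ by induction on exactly the forced-passage-node decomposition you describe (matching the $\Seq$/$\Alt$/$\MyLoop$ structure returned by \Pp), and handles the remaining inclusion $\gpaths{G,G_e}\subseteq\dpaths{\mathcal{D},\mathcal{D}_e}$ by appealing to the classical CFG-to-hierarchical-DAG reduction rather than re-deriving it. The subtleties you flag (ordering of forced nodes by dominance, the $ncd$ bookkeeping, leaf renaming) are precisely the points the paper's argument addresses.
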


\begin{proof}
  See Appendix~\ref{app:CFG-CFT} for details.
\end{proof}


\section{Context-sensitive execution time}
\label{sub:context-annotation}

We now enrich the control-flow tree with \emph{context annotations}
designed to represent the result of extra-CFG analyses, that will help
us reduce the pessimism in WCET estimation.

\subsection{Context annotations}
A context annotation constrains the conditions under which a
sub-tree can be executed. In this work, annotations only represent
constraints related to loops, which is usually the main source of WCET
variability. Note that with IPET-based approaches, this information
would be represented by an ILP constraint.  We will detail the role of
context annotations in parametric WCET in
Section~\ref{sec:parametric}.

\begin{mydef}
  A context annotation is a tuple $(t, l, m)$, where $t$ is a tree,
  $l$ refers to an external loop (i.e., $l =\MyLoop(h, t_b, x_h, t_e)$
  is a loop such that $t$ is contained within the loop body $t_b$), and $m$
  is the maximum number of times $t$ can be executed each time $l$ is
  entered. The null annotation is denoted by $(t, \top, \infty)$.

  Let $\ann(t)$ be the annotation on the root of tree $t$ and let
  $\annSet(t)$ be the set of annotations on all nodes in $t$
  (including root $t$).

  We define $\occ(\mathcal{P}, p)$ $\occ: 2^{\mathcal{P}} \times
  \mathcal{P} \rightarrow \mathbb{N}$ as the function that returns the
  number of occurrences of any path in $\mathcal{P}$ inside path $p$.

  Let $t$ be a control-flow tree with context annotations. The previous
  definition of feasible execution paths is altered as follows:
  \begin{align*}
    \tpaths{\MyLoop(h,t_1,n,t_2)}=\{& p|\exists p_1,\ldots,p_n\in\tpaths{t_1},
    p_e \in \tpaths{t_2},p=p_1\cat\ldots\cat p_n \cat p_e,\\
    &\forall (t, l_h, m) \in \annSet(t_1), \occ(\tpaths{t}, p) \le m \}
  \end{align*}
\end{mydef}

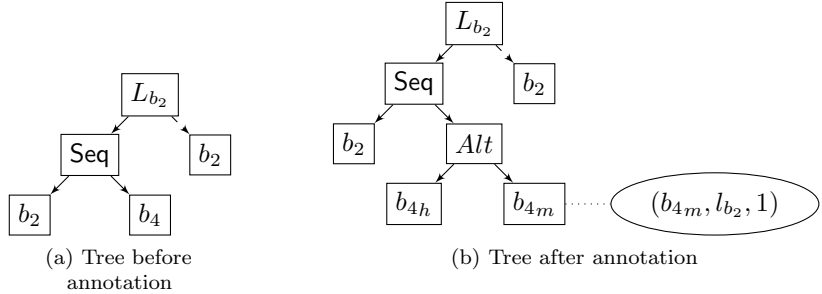
\begin{figure}[tbp]
  \begin{center}
    \subfloat[Tree before annotation]{
      \begin{tikzpicture}[scale=0.8]
        \tikzset{vertex/.style = {shape=rectangle,draw,minimum size=1.5em}}
        \tikzset{edge/.style = {->,> = latex'}}
        \tikzset{edge_exit/.style = {->,> = latex', dashed}}
	\node[vertex] (g) at (1, -1) {\substituted{$\MyLoop(b_2)$}{$L_{b_2}$}};
	\node[vertex] (h) at (2, -2) {$b_2$};
	\node[vertex] (i) at (0, -2) {$\Seq$};
	\node[vertex] (j) at (-1, -3) {$b_2$};
	\node[vertex] (k) at (1, -3) {$b_4$};
	\draw[edge_exit] (g) to (h);
	\draw[edge] (g) to (i);
	\draw[edge] (i) to (j);
	\draw[edge] (i) to (k);
      \end{tikzpicture}
      \label{fig:pre}
    }
    \hspace{1cm}
    \subfloat[Tree after annotation]{
      \begin{tikzpicture}[scale=0.8]
        \tikzset{vertex/.style = {shape=rectangle,draw,minimum size=1.5em}}
        \tikzset{annotation/.style = {shape=ellipse,draw,minimum size=1.5em}}
        \tikzset{edge/.style = {->,> = latex'}}
        \tikzset{edge_exit/.style = {->,> = latex', dashed}}
        \tikzset{edge_annot/.style = {-,> = latex', dotted}}
	\node[vertex] (g) at (1, -1) {\substituted{$\MyLoop(b_2)$}{$L_{b_2}$}};
	\node[vertex] (h) at (2, -2) {$b_2$};
	\node[vertex] (i) at (0, -2) {$\Seq$};
	\node[vertex] (j) at (-1, -3) {$b_2$};
	\node[vertex] (k) at (1, -3) {$Alt$};
	\node[vertex] (hit) at (0, -4) {${b_4}_h$};
	\node[vertex] (miss) at (2, -4) {${b_4}_m$};
	\node[annotation] (ann) at (5, -4) {$({b_4}_m, l_{b_2}, 1)$};
	\draw[edge_annot] (miss) to (ann);
	\draw[edge_exit] (g) to (h);
	\draw[edge] (g) to (i);
	\draw[edge] (i) to (j);
	\draw[edge] (i) to (k);
	\draw[edge] (k) to (hit);
	\draw[edge] (k) to (miss);
      \end{tikzpicture}\label{fig:post}
    }
    \end{center}
    \caption{Context annotations}
\end{figure}

We motivate the need to represent context-sensitive information
by using two examples. First, let us consider a triangular loop: a \emph{for}
loop $i=1..10$, containing an inner \emph{for} loop $j=i..10$. The
maximum iteration count for each loop considered separately is $10$,
but the inner loop body can be executed at most $\sum_{i=1}^{10}i$
times. Knowing this information will enable us to produce a tighter WCET
estimation. To model this example, we have a $\Leaf(b)$ node
representing the block inside the inner loop.  This node has an
annotation $(\Leaf(b), l_{outer}, 55)$ where $l_{outer}$
represents the outer loop. This annotation represent the fact that,
due to the triangular loop, the block $b$ can be executed at most
$\sum_{j=1}^{10}j = 55$ times in a complete execution of $l_{outer}$.

As a second example, we consider the instruction cache analysis by
categorization. In this approach, blocks can be categorized as
\emph{persistent} with respect to a loop (for the sake of simplicity, we
assume that each basic block matches exactly a cache block), meaning
that the block will stay in the cache during the whole execution of the
loop (only the first execution results in a cache miss). For instance,
in the control-flow tree of Figure~\ref{fig:pre}, let us assume that the
block corresponding to $\Leaf(b_4)$ is persistent. For every complete
execution of loop $l_{b_2}$, $b_4$ can only cause a cache miss
once. Thus the execution time of $b_4$ must account for the cache miss
only once per complete execution of loop $l_{b_2}$. To model this
example, we proceed in two steps. First, we modify the CFT by splitting
the block $b_4$ from Figure~\ref{fig:pre} into two (virtual) leaves,
representing respectively the cache hit and cache miss cases. This is
shown in Figure~\ref{fig:post}: $\Leaf({b_4}_m)$ corresponds to the miss
and $\Leaf({b_4}_h)$ to the hit. Then, we add an annotation
$({b_4}_m, l_{b_2}, 1)$ to represent the fact that ${b_4}_m$ can be
executed only once per execution of loop $l_{b_2}$.

Due to context annotations, some structurally feasible paths are
now unfeasible. As an example, in the tree of Figure~\ref{fig:post},
path $\{b_2 , {b_4}_m , b_2 , {b_4}_m , b_2\}$ is feasible if we
ignore annotations. However, taking context annotations into account,
this path it is not.

Context annotations are intended to be a generic tool to model various
WCET-related effects (hardware, and software), therefore the exact way
to generate those annotations will depend on the effect we want to
model (and on the underlying analysis). Furthermore, as shown
  with the cache example above, it may be necessary to modify the CFT
  to represent some constraints. In the future, we might use other CFT
  transformations to represent other types of constraints (not
  necessarily only duplication).

\subsection{Abstract WCET}
\label{sec:abstract}

Due to context annotations, the WCET of a segment of code that is
executed iteratively can vary at each iteration. We introduce the
concept of \emph{abstract WCET} to represent the set of WCETs associated
with a tree node. Abstract WCETs are defined using \emph{multi-sets}, a
generalization of sets where multiple instances of the same element are
allowed. The number of instances of some element $k$ in the multiset is
denote $m(k)$ and called its \emph{multiplicity}. In our context, we
consider that the smallest element of the multiset has an implicit
infinite multiplicity. We recall below some definitions on multi-sets:

\begin{mydef}
\label{def:multiset}
Let $\mathbb{N}^\#$ denote the set of multi-sets over $\mathbb{N}$. Let
$\eta,\eta^{\prime}\in\mathbb{N}^\#$ and let $n\in\mathbb{N}$. The following operations are
defined on multi-sets:
  \begin{itemize}
  \item $\eta [n]$, denotes the $(n+1)$-th greatest element of $\eta$, i.e.
     $|\{k|k\in\eta,k>\eta[n]\}|\leq n<
    |\{k|k\in\eta,k>\eta[n]\}|+m(\eta[n])$. For instance, if
    $\eta=\{4,3,3\}$ then $\eta[0]=4$, $\eta[1]=\eta[2]=3$, ...;
  \item $\eta|_n$ denotes the multi-set that contains the $n$ greatest
    elements of $\eta$ (i.e. $\eta[0],\ldots,\eta[n-1]$) and an infinite number of zeros;
  \item $\eta \uplus \eta^{\prime}$ is a modified version of the
    traditional multi-set sum, which we will denote
    $\uplus_{trad}$. Like $\uplus_{trad}$, $\uplus$ sums
    multiplicities. The difference is as follows. Let $min_{\eta}$,
    $min_{\eta^{\prime}}$ denote respectively the smallest elements of
    $\eta$ and $\eta^{\prime}$. Then, we have:
    $\eta\uplus\eta^{\prime}=\eta\uplus_{trad}\eta^{\prime}\backslash\{k|k\leq
    max(min_{\eta}, min_{\eta^{\prime}})$.
    So for instance, $\{8,8,4\}\uplus\{9,8,3,2\}=\{9,8,8,8,4\}$;
  \item $\eta \otimes k$ denotes the multi-set for which each member has $k$
    times the multiplicity it has in $\eta$;
  \item $\eta^{\prime\prime} = \eta \oplus \eta^{\prime}$ is the
    multi-set such that: $\forall i\in\mathbb{N}$, $\eta^{\prime\prime}[i] = \eta[i] + \eta^{\prime}[i]$.
  \end{itemize}
\end{mydef}

The notion of \emph{abstract WCET} is now defined as follows:

\begin{mydef}
  For any tree $t$, its abstract WCET is a pair $\alpha = (l, \eta)$,
  where $l$ is a loop and $\eta$ is a multi-set over $\mathbb{N}$. The
  presence of an integer $n$ in $\eta$ means that the code associated
  with $t$ may have an execution time $n$, but only once, each time $l$
  is entered.
\end{mydef}

For instance, in our cache example from Figure~\ref{fig:post}, the
abstract WCET computed for the \Alt\ node would be
$(l_{b_2},\{\wcet{{b_4}_m},\wcet{{b_4}_h},\wcet{{b_4}_h},\wcet{{b_4}_h},...\})$,
meaning that the WCET of that node is $\wcet{{b_4}_m}$ for the first
iteration of loop $l_{b_2}$ and then it is $\wcet{{b_4}_h}$ for all
subsequent iterations of the loop. Note that, if we exit and re-enter
the loop, the WCET of the \Alt\ node will again be $\wcet{{b_4}_m}$,
then $\wcet{{b_4}_h}$, $\wcet{{b_4}_h}$, etc.

The abstract WCET for an expression $t \in \mathcal{T}$ is computed by
applying the evaluation function $\gamma : \mathcal{T} \rightarrow
L_G\times \mathbb{N}^\#$, defined below, using helper function $\omega(t)$:

\begin{center}
  $\gamma(t) = (l, \eta)$\quad\quad
where $l = l_1 \sqcap l_2$, $\eta = \eta_1|_n$, $(l_1, \eta_1) =
\omega(t)$ and $(t, l_2, n) = \ann(t)$.
\end{center}

\added{$\omega(t)$
  computes the abstract WCET without considering the annotation on the
  root node of $t$, and then $\gamma(t)$ computes the abstract WCET
  resulting from the application of the annotation over $t$ (if any).}
Notice that, if no annotation is defined over $t$, then
$\ann(t) = (t, \top, \infty)$; as a consequence $l \sqcap \top = l$,
$\eta|_\infty = \eta$, and $\gamma(t) = \omega(t)$.
\deleted{Basically, $\omega$ computes the abstract WCET without
  considering the annotation on the root node of $t$, and $\gamma$
  computes the abstract WCET resulting from the application of the
  annotation.}

We now define $\omega(t)$ for the different cases. First, when
$t=\Leaf(b)$, the WCET of the basic block $b$ is repeated an infinite
number of times. In formula:

\begin{center}
  $\omega(t) = (\top, \{\wcet{b}\} \otimes \infty)$
\end{center}

The idea behind the processing of \Alt\ nodes is based on the
following observation: the worst-case scenario for multiple executions
of the \textsf{Alt} node may involve execution of different
children. Therefore, we need to merge the multi-sets resulting from
$t_1, \dots, t_n$. In formula, when $t = \Alt(t_1, \dots, t_n)$:

\begin{center}
  $\omega(t) = (l_1 \sqcap \dots \sqcap l_n, \eta_1 \uplus \dots \uplus \eta_n)$\quad\quad
  where $(l_1, \eta_1)=\gamma(t_1)$, \dots, $(l_n, \eta_n)=\gamma(t_n)$.
\end{center}

\begin{example}
  Let us consider an \Alt\ node with two children $t_1$ and $t_2$,
  such that $\gamma(t_1) = (l, \{5,4,2,1\})$ and
  $\gamma(t_2) = (l, \{6, 2\})$. The first time the \Alt\ node is
  executed, the WCET will be $6$ (from $t_2$), the second time it will
  be $5$ (from $t_1$), then $4$, and so on.  As such, we compute the
  abstract WCET for the \Alt\ node by taking the union of the
  multi-set components of the two children abstract WCET.  Therefore,
  in our example,
  \substituted{$\omega(t) = (l, \{6, 5, 4, 2, 1\})$}
              {$\omega(t) = (l, \{6, 5, 4, 2, 2\})$}.
\end{example}

When $t = \Seq(t_1, \dots, t_n)$, 
we make the following observation: for any
$n$, the worst-case time for $n$ executions of the \Seq\ node is equal
to the worst-case time for $n$ executions of $t_1$ plus the worst-case
time for $n$ executions of $t_2$ and so on. In formula:
\begin{center}
  $\omega(t) = (l_1 \sqcap \dots \sqcap l_n, \eta_1 \oplus \dots\oplus \eta_n)$\quad\quad
  where $(l_1, \eta_1)=\gamma(t_1)$, \dots, $(l_n, \eta_n)=\gamma(t_n)$.
\end{center}

\begin{example}
  Let us consider a \Seq\ node with two children $t_1$ and $t_2$, such
  that $\gamma(t_1) = (l, \{5,4\})$ and $\gamma(t_2) = (l,
  \{2,1\})$. The first time the \Seq\ node is executed, its WCET will
  be $5+2=7$, the second time it will be $4+1=5$.  As such, we compute
  the abstract WCET for the \Seq\ node by adding elements of
  corresponding ranks. In the example, $\omega(t) = (l, \{7, 5\})$.
\end{example}

When $t = \MyLoop(h, t_1, x_h, t_2)$, \added{let $(l_1, \eta_1) = \gamma(t_1)$ and $(l_2, \eta_2) = \gamma(t_2)$}. 
Two different cases must be
considered\footnote{Notice that, by definition of context annotation,
  it is not possible to have $l_h \equiv l_2$.}. If $l_h$ is the loop
component of the abstract WCET of $t_1$ (case \substituted{$l_h=l$}{$l_1 \equiv l_h$}), then the
execution time of $t_1$ is a fixed value. In this case, the worst-case time for
one execution of the \MyLoop\ node is always the worst case execution
time for $x_h$ executions of the loop body $t_1$.

Otherwise, \substituted{$l$}{$l_1$} represents a loop that contains the currently processed
\MyLoop\ node. As in the previous case, the worst-case execution time
for one execution of the \MyLoop\ node is the worst-case execution time
for $x_h$ executions of the loop body $t_1$. However, since \substituted{$l$}{$l_1$} refers
to an outer loop, successive executions of the \MyLoop\ node yield
different execution times, and these are summed together in groups of
$x_h$ elements.

To summarize, in formula: 
\begin{center}
  \begin{equation*}
    \omega(t) = \begin{cases}
      (l_2, (\{\sum_{i=0}^{x_h-1} \eta_1[i]\} \otimes +\infty) \oplus \eta_2) & \text{if } l_h \equiv l_1 \\
      (l_1 \sqcap l_2, \eta \oplus \eta_2) & \text{otherwise }
    \end{cases}
  \end{equation*}
  where $(l_1, \eta_1)=\gamma(t_1)$ and $(l_2, \eta_2)=\gamma(t_2)$ and
  $\eta[i]=\sum_{j=i\cdot x_h}^{i \cdot x_h+x_h-1}\eta_1[j]$.
\end{center}

\begin{example}
  Let $\gamma(t_1)=(l_h, \{5,4,3\})$ \added{(case $l_h \equiv l_1$)}, let the loop bound $x_h=2$ and let
  $t_2$ be empty. Then the execution time for one execution of the loop
  is always $5+4=9$ (the sum of the $x_h$ first ranks of the multi-set)
  and we have $\omega(t)=(\top, \{9\} \otimes \infty)$.
\end{example}

\begin{example}
  Let \substituted{$\gamma(t_1)=(l, \{5,4,3,2\})$}{$\gamma(t_1)=(l_1, \{5,4,3,2\})$ (case $l_h \not\equiv l_1$)}, let the loop bound $x_h=2$ and let
  $t_2$ be empty. Then the first execution of the loop will yield
  execution time $5+4=9$ (the sum of the first $x_h$ ranks of the
  multi-set), while the second execution will yield execution time
  $3+2=5$ (the sum of the subsequent $x_h$ ranks of the multi-set).
  Therefore, \substituted{$\omega(t)=(l, \{9,5\})$}{$\omega(t)=(l_1, \{9,5\})$}.
\end{example}

Notice that we make pessimistic simplifications concerning the loop
component in the computation of $\omega$ and $\gamma$. Consider, the
computation for $t=\Alt(t_1, \dots, t_n)$ for instance. The WCET of
$t_1, \dots, t_n$ may depend on different loops, but keeping track of
all these loops in the WCET of $t$ would be very complex. So, as a
simplification, we only keep track of the greatest lower bound of
these loops (the loop that most immediately contains $t$). This is
also true in other cases. However, this approximation is safe (see the
proof of Theorem~\ref{app:WCET-correctness} for details) and has a low
impact on WCET over-approximation (see Section~\ref{sec:eval}).

\subsection{From abstract to concrete WCET}
\label{sec:awcet-eval}

We will now detail how to evaluate the WCET of a tree $t$
\substituted{ executed $n$ times (where $(l, \eta)=\gamma(t)$)}{inside
  a loop $l$. Suppose that $t$ is executed $n$ times and that its
  abstract WCET is $\gamma(t)=(l, \eta)$.} The execution time for each
individual execution of $t$ depends on the number of times it was
executed after the last time $l$ was entered. Let $e$ be the number of
times $l$ was entered, and let us assume that the $n$ execution of $t$
are distributed uniformly across all $e$ executions of $l$ (this is a
realistic assumption because our computation method ensures that
iterating every loop to the maximum results in the longest execution
time).

\begin{mydef}
  Let $t$ be a control-flow tree and let $\gamma(t)=(l,\eta)$. The
  concrete WCET of $t$ in the scenario where $t$ is executed $n$ times
  and the loop $l$ is executed $e$ times, where $e$ and $n$ are strictly
  positive and $n$ is a multiple of $e$, is computed as: $\sum_{i=1}^n{(\eta \otimes e)[i]}$
\end{mydef}

This definition applies to any node of the tree. To compute the WCET of
a complete program represented by tree $t$, we apply the formula with
$n=e=1$, since we are only interested in one execution of the
program. The WCET of the program is thus computed as
$\sum_{i=1}^1{(\eta \otimes 1)[i]}=\eta[1]$. The following theorem
establishes the soundness of our WCET evaluation method.

\begin{theorem}
  Let $G$ a CFG. Let $\mathcal{D}=DAG(G,\top)$ and let
  $t=\Pp(\mathcal{D}, \mathcal{D}_s, \mathcal{D}_e)$. Let
  $(l,\eta)=\gamma(t)$. We have: $\forall p\in\gpaths{G,G_e}, \wcet{p} \leq \eta[1]$
\end{theorem}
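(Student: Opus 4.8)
The plan is to combine the two soundness ingredients already available: Theorem~\ref{th:path-validity}, which gives $\gpaths{G,G_e} \subseteq \tpaths{t}$, and a new key lemma stating that $\gamma$ correctly abstracts the WCET of all tree execution paths. Concretely, I would first reduce the claim: since every CFG path $p \in \gpaths{G,G_e}$ is also a tree path $p \in \tpaths{t}$, and since $\wcet{p}$ depends only on the sequence of basic blocks in $p$ (Definition of $\wcet{p}$ as a sum), it suffices to prove that for every $p' \in \tpaths{t}$ we have $\wcet{p'} \le \eta[1]$ where $(l,\eta) = \gamma(t)$. So the whole burden is a statement about the CFT alone, independent of the CFG.

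The core is therefore a structural induction on the control-flow tree. The right induction hypothesis must be stronger than ``one execution'' — it has to talk about $k$ executions of a subtree and the first $k$ ranks of its multi-set. I would prove: for every subtree $s$ with $\gamma(s) = (l_s, \eta_s)$, and for every list $p_1, \ldots, p_k$ of feasible execution paths of $s$ (as permitted by $\tpaths{\cdot}$ with annotations), $\sum_{j=1}^k \wcet{p_j} \le \sum_{i=0}^{k-1} \eta_s[i]$, i.e. the total time of any $k$ executions is bounded by the sum of the $k$ largest elements of $\eta_s$. The base case $s = \Leaf(b)$ is immediate since $\eta_s = \{\wcet{b}\}\otimes\infty$. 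For $\Seq$, I use that the worst case for $k$ executions splits componentwise and $\oplus$ adds corresponding ranks; for $\Alt$, I partition the $k$ executions among the children and use that $\uplus$ dominates any such split (this is exactly where the ``merge the multi-sets'' intuition must be made rigorous — the sum of the $k$ largest of $\eta_1 \uplus \cdots \uplus \eta_n$ is at least the best achievable by distributing the $k$ executions); for $\MyLoop$, I treat the two cases of the definition of $\omega$ separately, using $x_h$-grouping of ranks and the definition of $\tpaths{\MyLoop}$. Finally, applying the lemma to the root $t$ with $k=1$ gives, for any $p' \in \tpaths{t}$, $\wcet{p'} \le \eta[0]$; combined with Theorem~\ref{th:path-validity} this yields the theorem. (I note the statement writes $\eta[1]$ rather than $\eta[0]$, matching the text's ``$\eta[1]$'' in Section~\ref{sec:awcet-eval}; I would follow whichever indexing convention the paper fixes, the argument being the same.)

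The two subtle points I would be careful about are the handling of context annotations and the loop-component approximation. For annotations: $\gamma(s) = (l, \eta_1|_n)$ truncates $\eta_1$ to its $n$ largest elements (plus zeros), and the definition of $\tpaths{\MyLoop}$ forbids more than $m$ occurrences of an annotated subtree per loop entry; I must check that the $k$ paths permitted through $s$ really do respect this bound so that truncation to $n$ ranks is legitimate — this requires carefully reading off the occurrence constraint from $\annSet$. For the loop component: the claim is purely about the multi-set $\eta$, so the lattice element $l$ only matters insofar as the $\MyLoop$ case distinguishes ``$t_1$'s time is a fixed value'' ($l_1 \equiv l_h$) from ``$t_1$ depends on an outer loop''; I must verify that in the first case the feasible tree paths of the body indeed all have the same worst-case weight $\sum_{i=0}^{x_h-1}\eta_1[i]$, and in the second case that successive blocks of $x_h$ body-executions are correctly bounded by the grouped sums $\eta[i]$.

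The main obstacle I expect is the $\Alt$ (and correspondingly the inductive step feeding it) case, specifically proving that $\uplus$ is an upper bound for every way of distributing $k$ executions among the children: one has $k = k_1 + \cdots + k_n$, each child contributes at most $\sum_{i=0}^{k_j - 1}\eta_j[i]$ by the induction hypothesis, and one must show $\sum_j \sum_{i=0}^{k_j-1}\eta_j[i] \le \sum_{i=0}^{k-1}(\eta_1 \uplus \cdots \uplus \eta_n)[i]$. This is a combinatorial fact about picking the $k$ largest elements from a union of multi-sets versus picking prefixes from each — true, but needing a clean argument (e.g. an exchange/greedy argument, or observing that any such distributed selection is one particular size-$k$ sub-multi-set of the union, hence at most the top-$k$ sum), and complicated by the nonstandard definition of $\uplus$ (which discards elements at or below $\max(\min_\eta, \min_{\eta'})$) — I would need to argue that the discarded small elements never belong to the top $k$ for any $k$ that matters, so the modified $\uplus$ agrees with $\uplus_{trad}$ on all relevant ranks.
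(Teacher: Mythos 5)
Your overall architecture matches the paper's: reduce to a pure CFT statement via Theorem~\ref{th:path-validity}, then run a structural induction with a strengthened hypothesis bounding $k$ executions of a subtree by the sum of the top $k$ ranks of its multi-set. The $\Alt$ obligation you isolate (distributed prefix sums versus the top-$k$ of the merged multi-set) and your worry about the nonstandard $\uplus$ are both genuine and resolvable (every discarded element is at most $\max(\min_{\eta},\min_{\eta'})$, which remains present with infinite multiplicity, so top-$k$ sums can only increase); the paper in fact glosses over both.

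However, your induction hypothesis is missing a parameter, and as stated it is false, so the induction cannot close. Bounding $\sum_j \wcet{p_j}$ by $\sum_{i=0}^{k-1}\eta_s[i]$ for any $k$ admissible executions of $s$ fails for annotated subtrees once the enclosing loop is entered more than once: in the cache example, $\gamma$ of the $\Alt$ node is $(l_{b_2},\{\wcet{{b_4}_m},\wcet{{b_4}_h},\ldots\})$, and if $l_{b_2}$ is entered $e=2$ times while the node executes $k=4$ times, the admissible worst case pays the miss cost twice (once per entry), exceeding $\sum_{i=0}^{3}\eta_s[i]$, which contains only one miss. The paper's predicate $V(t,\eta)$ therefore carries two parameters: for all $e,n$ and all $p\in\prep(t,e,n)$, $\wcet{p}\le \eval(\eta,e,n)=\sum_i (\eta\otimes e)[i]$. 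The parameter $e$ (number of entries of the relevant enclosing loop) simultaneously relaxes the occurrence constraints to $e\cdot m$ in the path set $\prep$ and scales the multiplicities by $e$ in the bound; this is exactly what makes the annotation-truncation step work (Lemma~\ref{lm:V-annotation-on-root}, via $M=\max(e\cdot k,n)$) and what the $\MyLoop$ case needs when it invokes the hypothesis on the body with $\prep(t_b,e,n\cdot x_h)$. You flagged the annotation interaction as a point to ``check,'' but the fix is not a check --- it is a strengthening of the induction hypothesis itself, which is the central technical device of the paper's proof. (Your remark on $\eta[1]$ versus $\eta[0]$ is fair; the paper is itself inconsistent on indexing between the theorem statement and the appendix, where $\eval$ sums from $i=0$ to $n-1$.)
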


\begin{proof}
  See Appendix~\ref{app:WCET-correctness} for details.
\end{proof}


\section{Symbolic computation}
\label{sec:parametric}

In this section we study the problem of computing the abstract WCET of a
tree when some parameters of the tree are unknown (loop bounds for
instance, but not only). We show that, using simple syntactic sugaring,
our definition of $\omega(t)$ produces formulae akin to arithmetic
expressions. Then we rely on existing work on symbolic computation of
arithmetic expressions to simplify abstract WCET formulae. The
simplification step is mainly useful in case of on-line formula
evaluation. It reduces memory overhead (since formulae must be part of
the embedded code) as well as execution time overhead (since formulae
must be evaluated at each task instanciation).

\subsection{Abstract WCET formulae}
\label{sec:wcet-formulae}

First, we introduce several operators on abstract WCET, which act as syntactic
sugar, to be able to express WCET computation as arithmetic
computation.

\begin{mydef}
  Let $t_1$ and $t_2$ be two control-flow trees. We define a set of
  operations on abstract WCET such that:

  \begin{align*}
    \omega(t_1)\plusw \omega(t_2)&= \omega(\Seq(t_1,t_2))\\
    \omega(t_1)\maxw \omega(t_2)&= \omega(\Alt(t_1,t_2))\\
    (\omega(t_1),\omega(t_2),h)\poww{x_h} &=
    \omega(\MyLoop(h,t_1,x_h,t_2))\\
    \annw{\omega(t_1)}{(h,n)} &= \gamma(t_1)\quad \text{(where
    }\ann(t_1)=(t_1,l_h,n))\\
    n\dotw (l,\eta) &= (l,\eta^{\prime}) \quad \text{(where }\forall
    i,\eta^{\prime}[i]=\eta[i]\times n)\\
    \constw{k}&=\{k\}\otimes \infty\
  \end{align*}

\end{mydef}

Furthermore, we let $\zerow\equiv (\top,\constw{0})$. We define
the following grammar to represent the set of formulae $\wformula$
corresponding to the computation of the abstract WCET of a control-flow
tree ($w\in\wformula$):

\[
\begin{array}{lcl}
  w & ::= & const \Choice id \Choice \annw{w}{(h,it)}
  \Choice w\plusw w \Choice w\maxw w \Choice (w,w,b)\poww{it}\\
  h & ::= & b \Choice id\\
  it & ::= & i \Choice id\\
\end{array}
\]

The simplest formula is a constant abstract WCET value ($const\in
(L_G\times N^\#)$). A formula can also be a variable corresponding
to an unknown WCET value ($id$). A formula can also be the sum ($w\plusw
w$), the product ($w\maxw w$) or the repetition of two formulae
($(w,w,b)\poww{it}$).  Finally, a formula can also consist of the
application of an annotation to a formula ($\annw{w}{(h,it)}$). The factor
of a repetition and the factor of an annotation ($it$) can either be a
constant integer value ($i$) or a variable ($id$). The loop header of an
annotation ($h$) can either be a basic block name ($b$) or a variable ($id$).

\subsection{Symbolic values}
\commentaire{This whole section is a complete rework of the 
  paragraphs previously at the end of Section~\ref{sec:wcet-formulae}.}
\label{sec:symb-ex}

As we can see, several elements of these formulae can be symbolic values
(denoted by $id$), i.e. variable parameters: symbolic WCET value ($w$), symbolic
loop iteration bound ($it$), symbolic loop header ($h$). Let us now illustrate how these symbolic values can be used to model various
WCET variation sources. A simple example is the case where the
number of iterations of a loop depends on an input of the
system. The WCET of the loop is statically evaluated to
$(\omega_1,\omega_2,h)\poww{n}$, where $n$ is a symbolic value. The value
of $n$ is computed dynamically and the WCET of the loop is deduced from
this value.

As a second example, we discuss how to perform a modular WCET analysis,
in the case where the program contains a call to a dynamic library. Assume
for instance that the library call is in the $else$ branch of
an $if-then-else$ and that the $then$ branch has a
constant WCET of 5. The WCET is statically evaluated to
$(\top,\{5\}\otimes\infty)\maxw \omega$, where $\omega$ is a symbolic
value. We perform a separate analysis on the different programs the
dynamic library call can correspond to, so we obtain a different WCET
for each possibility. At program execution, we replace $\omega$ by the
WCET corresponding to the library that is actually called and
deduce the program WCET.

As a last example we discuss how to take into account the results of
an instruction cache analysis. Let us consider the execution of a
multi-task system with a non-preemptive scheduler. In such a system,
though the hardware provides no means to consult the exact cache state,
it can be approximated to an abstract cache state using the techniques
of \cite{blackbox}. In some cases, the category of a block, that is to
say whether the execution of the block will result in a miss or in a
hit, depends on the content of the cache at the beginning of the
execution of the task containing it. As a consequence, the block
category cannot be determined statically, however it can be determined
dynamically based on the abstract cache state at the beginning of the
task execution. In Figure~\ref{fig:post}, we have shown how to use
context annotations to model a persistent block. Similarly, to model a
block with a non-static category, we split the block into a $hit$ and a $miss$
alternative, and add annotations on both alternatives. So the WCET
formula for this block will be:
$\annw{(\omega_{hit})}{h,n_1}\maxw\annw{(\omega_{miss})}{h,n_2}$, where
$n_1$ and $n_2$ are symbolic values. At the beginning of
the task execution, we determine the values of $n_1$ and
$n_2$ based on the abstract cache content and deduce the task
WCET. A similar approach can be used to take into account data-cache
analysis and branch prediction.

More generally, we believe that symbolic WCET evaluation is a powerful
generic tool with many potential applications. The focus of this paper
however, is to present the general framework. Potential applications
will be the subject of future work.

Concerning the limitations of our approach, currently we cannot specify
constraints relating different symbolic values, which may prevent some
simplifications in WCET formula. For instance, a single parameter in the
program external context (e.g. the data-cache size) may introduce
several separate symbolic values in the WCET formula (e.g. the WCET of
each basic-block whose WCET is impacted by the data-cache size will
become a symbolic value). Handling such related symbolic values is clearly
also an important topic for future work.

A second limitation is that some extra-CFG analyses information may be
difficult to represent using context-annotations, such as for instance
the results of CCG analysis~\cite{Li1996}.

\subsection{Formula simplification}
\label{sub:simp}

When variables appear in a WCET formula, we cannot reduce the
formula to a constant abstract WCET value. However, in many cases the
formula can be transformed into a simpler, yet equivalent formula. For
instance, we have: $(x\plusw 2\dotw x)\plusw 3\dotw x\plusw y=6\dotw x\plusw y$

Figure~\ref{fig:rewrite} lists all the rewriting rules we use in order
to simplify WCET formulae. Most of them are direct transpositions of
integer arithmetic simplification rules \cite{cohen2003} to the case of
WCET formulae. We make the following comments:
\begin{itemize}
\item We rely on an order relation $\avant$ on formulae,
  so as to ensure that the commutativity rules can only be applied in
  one direction for two given formulae. Classically, the order relation
  is defined based on the syntactic structure of the formulae (see
  e.g. \cite{cohen2003} for details);
\item Distributivity is applied in reverse order and only to factor
  constant terms;
\item Concerning the annotation rewriting rule, the strategy consists in
  reducing the number of annotation applications;
\item Concerning the loop rule, since we have no rule for combining
  loops, we only extract the loop exit tree from the loop;
\item Combination of constant formulae is not detailed here but is
  applied as well. For instance, $(l,\constw{2})\plusw (l,\constw{3})$
  is simplified to $(l,\constw{5})$.
\end{itemize}

\begin{figure}
   
  \begin{minipage}[t]{.48\linewidth}
    \begin{center}

    \emph{Associativity.}
    \begin{align}
      (w_1 \plusw w_2) \plusw w_3 &\mapsto w_1 \plusw w_2 \plusw w_3   \label{rw:assoc-plus1}\\
      w_1 \plusw (w_2 \plusw w_3) &\mapsto w_1 \plusw w_2 \plusw w_3 \label{rw:assoc-plus2}\\
      (w_1 \maxw w_2) \maxw_3 &\mapsto w_1 \maxw w_2 \maxw w_3 \label{rw:assoc-max1}\\
      w_1 \maxw (w_2 \maxw w_3) &\mapsto w_1 \maxw w_2 \maxw w_3 \label{rw:assoc-max2}
    \end{align}
    
    \emph{Commutativity.}
    \begin{align}
      (w_1 \plusw w_2) &\mapsto (w_2 \plusw w_1) \text{ if $w_2 \avant w_1$ } \label{rw:commut-plus}\\
      (w_1 \maxw w_2) &\mapsto (w_2 \maxw w_1) \text{ if $w_2 \avant w_1$ }\label{rw:commut-max}
    \end{align}
    
    \emph{Distributivity.} 
    \begin{align}
      (cst_1& \plusw w_3) \maxw (cst_2 \plusw w_3) \mapsto \nonumber\\
      &(cst_1 \maxw cst_2) \plusw w_3 \label{rw:distributivity}
    \end{align}
    \end{center}
  \end{minipage}
  \hfill
  \begin{minipage}[t]{.48\linewidth}
    \begin{center}

    \emph{Neutral element.}
    \begin{align}
      w_1 \plusw \zerow &\mapsto w_1 \label{rw:neutral-plus}\\
      w_1 \maxw \zerow &\mapsto w_1 \label{rw:neutral-max}
    \end{align}

    \emph{Multiplication.}
    \begin{align}
      0 \dotw w_1 &\mapsto \zerow \\
      (k_i \dotw w_1) \plusw w_1 &\mapsto (k_i + 1) \dotw w_1
    \end{align}
    
    \emph{Annotation.}
    \begin{align}
      \annw{\zerow}{(h,it)} &\mapsto \zerow \\
      \annw{w_1}{(h,it)} \plusw \annw{w_2}{(h,it)} &\mapsto \annw{(w_1
                                                     \plusw w_2)}{(h,it)} \label{rw:annot-plus}
    \end{align}

    \emph{Loop.}
    \begin{align}
      (w_1, w_2, b)\poww{it} \mapsto (w_1, \zerow, b)\poww{it} \plusw
      w_2
      \label{rw:loop2}
    \end{align}
  \end{center}
  \end{minipage}
  \caption{\label{fig:rewrite}Abstract WCET formula rewriting rules}
\end{figure}

Let $\mathcal{R}$ denote the rewriting system consisting of all of these
rewriting rules. Let $w_1,w_2$ two WCET formulae. We write
$w_1\mapsto_{\mathcal{R}} w_2$, or simply $w_1\mapsto w_2$ when $w_1$
rewrites to $w_2$ using a single rule of $\mathcal{R}$. We write
$w_1\rwtrans w_2$ when $w_1$ rewrites to $w_2$ using a sequence of rules
of $\mathcal{R}$. Let $\rho$ denote a \emph{variable mapping}, that is
to say a set of substitutions of the form $id\rightarrow v$ where $id$
is an identifier and $v$ is a value. Let $\rho(w)$ denote the result of
the substitution of variables of $w$ by their values in
$\rho$. We assume that $\rho$ maps identifiers to values of the correct
type, meaning that it maps WCET identifiers to WCET values, loop
identifiers to loop headers and integer identifier to integer values. We
say that $\rho$ is a \emph{complete mapping} with respect to formula $w$
when it maps all variables of $w$ to a value.

\begin{lemma}
  Let $w_1,w_2\in\mathcal{\wformula}$. Let $\rho$ a complete variable
  mapping of $w_1$. We have:

  \[ w_1\rwtrans w_2 \Rightarrow \rho(w_1)=\rho(w_2)\]
\end{lemma}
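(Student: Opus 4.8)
The plan is to prove the lemma by structural induction on the derivation $w_1 \rwtrans w_2$, reducing it to the single-step case $w_1 \mapsto_{\mathcal{R}} w_2$ and then performing a case analysis on which rewriting rule of Figure~\ref{fig:rewrite} is applied. First I would observe that $\rwtrans$ is the reflexive-transitive closure of $\mapsto$, so it suffices to show that one rewriting step preserves the value under any complete mapping $\rho$; the multi-step case then follows by an immediate induction on the length of the rewriting sequence (noting that if $\rho$ is complete for $w_1$ and $w_1 \mapsto w_2$, then $\rho$ is still complete for $w_2$, since rewriting introduces no fresh variables). I would also note that rewriting is closed under context: if $w_1 \mapsto w_2$ by applying a rule at some subterm, it suffices to check that the rule preserves $\rho$-value at that subterm, because the abstract-WCET operators $\plusw$, $\maxw$, $(\cdot,\cdot,\cdot)\poww{\cdot}$, $\annw{\cdot}{\cdot}$ and $\dotw$ are all functions on $L_G \times \mathbb{N}^\#$, hence applying $\rho$ and evaluating commutes with subterm replacement.

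The core of the proof is then a finite case analysis, one case per rule in Figure~\ref{fig:rewrite}. For each rule $\ell \mapsto r$, I would show $\rho(\ell) = \rho(r)$ by unfolding the definitions of the operators (Definition~\ref{def:multiset} and the syntactic-sugar definitions of $\plusw$, $\maxw$, $\poww{\cdot}$, $\annw{\cdot}{\cdot}$, $\dotw$, $\constw{\cdot}$) on concrete values. The associativity and commutativity rules (\ref{rw:assoc-plus1})--(\ref{rw:commut-max}) reduce to associativity/commutativity of $\oplus$ and $\uplus$ on multi-sets and of $\sqcap$ on the lattice $L'_G$; commutativity additionally uses that the side condition $w_2 \avant w_1$ plays no role once values are substituted (the order relation only controls rewriting direction, not semantics). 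The neutral-element rules (\ref{rw:neutral-plus}), (\ref{rw:neutral-max}) use that $\zerow = (\top, \constw{0})$ is a unit for $\oplus$ and $\uplus$ and that $\top$ is a unit for $\sqcap$. The multiplication rules use the definition of $\dotw$ (elementwise scaling of ranks) together with $0 \dotw (l,\eta) = (l, \constw 0)$ and $(k \dotw w) \oplus w = (k{+}1)\dotw w$. The distributivity rule (\ref{rw:distributivity}) uses that adding a fixed abstract WCET via $\oplus$ commutes with the rank-wise max underlying $\uplus$ restricted to the constant case. The annotation rules use the definition $\annw{\omega(t)}{(h,n)} = \gamma(t) = (l_1 \sqcap l_h, \eta_1|_n)$, so (\ref{rw:annot-plus}) reduces to the distribution of $|_n$ over $\oplus$ (truncation to $n$ ranks commutes with rank-wise sum) and of $\sqcap l_h$ over $\sqcap$ (idempotence/commutativity of $\sqcap$). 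The loop rule (\ref{rw:loop2}) uses the definition of $\poww{x_h}$ in both branches ($l_h \equiv l_1$ and otherwise) and the fact that the exit-tree contribution $\eta_2$ enters additively via $\oplus \eta_2$ in both cases, so it can be factored out as $\plusw w_2$.

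The main obstacle I anticipate is the annotation rule (\ref{rw:annot-plus}), which is the one genuinely non-arithmetic rule: I must verify that $\gamma$ applied to two trees with the same annotation $(h, n)$ on their roots, then summed with $\plusw$, equals the sum $\plusw$ first and then the annotation applied — i.e. that truncation-to-$n$-ranks ($\eta \mapsto \eta|_n$) commutes with rank-wise addition ($\oplus$) and that the loop-component $\sqcap l_h$ interacts correctly. Concretely, one needs $(\eta_1 \oplus \eta_2)|_n = \eta_1|_n \oplus \eta_2|_n$, which holds because $\oplus$ acts rank-by-rank and $|_n$ keeps exactly ranks $0,\dots,n-1$ (padding with zeros beyond), and $(l_1 \sqcap l_h) \sqcap (l_2 \sqcap l_h) = (l_1 \sqcap l_2) \sqcap l_h$ by associativity, commutativity and idempotence of $\sqcap$. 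The loop rule (\ref{rw:loop2}) is a close second in subtlety because it requires checking both branches of the $\omega$ definition for $\MyLoop$ agree with the right-hand side; but in both branches the $\eta_2$ term appears purely additively, so setting $t_2$ empty on the left and adding $w_2$ back via $\plusw$ yields the same multi-set and the same loop component. Everything else is routine unfolding of the multi-set operations, so once these two rules are dispatched the proof is complete.
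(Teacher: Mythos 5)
Your proposal is correct and follows essentially the same route as the paper: a rule-by-rule verification in which the annotation rule reduces to $(\eta_1\oplus\eta_2)|_n=\eta_1|_n\oplus\eta_2|_n$ plus lattice algebra on $\sqcap$, and the loop rule reduces to the observation that $\eta_2$ enters additively in both branches of the $\omega$ definition for \MyLoop. The only difference is that you make explicit the reduction from $\rwtrans$ to single steps and the closure under context, which the paper leaves implicit.
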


\begin{proof}
  We must prove that, for each rewriting rule, the formula on the
  left of the rule is equivalent to the formula on the right. Most rules are trivial to prove and rely on arithmetic
  properties on integer multi-sets. We only detail the proof for rules on
  annotations and loops.

  \emph{Rule~\ref{rw:annot-plus}}.
	Let $(l_1, \eta_1)=w_1$ and $(l_2, \eta_2)=w_2$. 
	\begin{align*}
		\annw{(w_1 \plusw w_2)}{(h,it)} &= (l_1 \sqcap l_2, (\eta_1 \plusw \eta_2)|_{it})
		= (l_1 \sqcap l_2, \eta_1|_{it} \plusw \eta_2|_{it}) \\
	    &= (l_1, \eta_1|_{it}) \plusw (l_2, \eta_2|_{it})
              = \annw{w_1}{(h,it)} \plusw \annw{w_2}{(h,it)}
	\end{align*}

	\emph{Rule~\ref{rw:loop2}}.
	Let $(l_1, \eta_1)=w_1$ and $(l_2, \eta_2)=w_2$. 
	
	By definition of the $\omega$ function on \MyLoop\ nodes, we see
        that the computation result for $(w_1, w_2, b)\poww{it}$ is of
        the form $(\langle loop \rangle, \langle expression \rangle
        \plusw \eta_2)$. Therefore, let us define $(l, \eta)$ such that
        $(w_1, w_2, b)\poww{it}$ = $(l, \eta \plusw \eta_2)$.

	If $l_1 = b$ then:
	
	\begin{align*}
		(w_1, w_2, b)\poww{it} &= (l_2, \eta \plusw \eta_2) = (\top, \eta \plusw \constw{0}) \plusw (l_2, \eta_2) = (w_1, \zerow, b)\poww{it} \plusw w_2
	\end{align*}

	If $l_1 \ne b$ then:
	
	\begin{align*}
		(w_1, w_2, b)\poww{it} &= (l_1 \sqcap l_2, \eta \plusw \eta_2) 
          = (l_1 \sqcap l_2, \eta \plusw \constw{0}) \plusw (l_2, \eta_2) \\
	&= (l_1, \eta \plusw \constw{0}) \plusw (l_2, \eta_2) 
	= (w_1, \zerow, b)\poww{it} \plusw w_2.
	\end{align*}
        This concludes the proof.
\end{proof}

The following Lemma states that recursive applications $\mathcal{R}$ to
a given formula $w$ eventually reach a fixed-point and always produce
the same formula $w'$.

\begin{lemma}
  $\mathcal{R}$ is convergent.
\end{lemma}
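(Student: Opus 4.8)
To show that $\mathcal{R}$ is convergent, I need to establish two properties: (i) $\mathcal{R}$ is \emph{terminating} (strongly normalizing), i.e., there is no infinite rewriting chain, and (ii) $\mathcal{R}$ is \emph{confluent} (Church--Rosser). By Newman's Lemma, it suffices to prove termination together with \emph{local} confluence, so the plan is to prove termination first and then check all critical pairs.

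\textbf{Termination.} The standard approach is to exhibit a well-founded measure on formulae that strictly decreases with every rule application. A single integer measure will not work directly because the rules trade off different aspects of the formula structure (e.g., associativity rules flatten nesting, commutativity merely reorders, distributivity and the annotation/loop rules reduce the number of certain operators). So I would use a \emph{lexicographic combination} of measures, ordered by priority. Natural candidates, roughly in decreasing priority: the number of $\plusw$ and $\maxw$ nodes (decreased by distributivity rule~\ref{rw:distributivity} and by the multiplication rule $(k_i\dotw w_1)\plusw w_1 \mapsto (k_i+1)\dotw w_1$, and by the neutral-element rules~\ref{rw:neutral-plus}, \ref{rw:neutral-max}); the number of annotation applications $\annw{\cdot}{(h,it)}$ (decreased by rule~\ref{rw:annot-plus} and by $\annw{\zerow}{(h,it)}\mapsto\zerow$); the number of $\poww{}$ nodes whose exit component is not $\zerow$ (decreased by the loop rule~\ref{rw:loop2}); the total number of parenthesisation mismatches with respect to the chosen right-associated normal form (decreased by the associativity rules~\ref{rw:assoc-plus1}--\ref{rw:assoc-max2}); and finally, for the commutativity rules~\ref{rw:commut-plus},~\ref{rw:commut-max}, the number of adjacent pairs of subformulae that are out of order with respect to $\avant$ (this is exactly the measure that makes a terminating bubble-sort-style rewriting work, and is why $\avant$ is imposed). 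Each rule must be checked to leave all higher-priority components non-increasing while strictly decreasing one. The main subtlety is that the loop rule~\ref{rw:loop2} rewrites $(w_1,w_2,b)\poww{it}$ and introduces a fresh $\plusw$; I would argue that the replacement $(w_1,\zerow,b)\poww{it}\plusw w_2$ has the exit slot set to $\zerow$, so while one $\plusw$ is added, the ``loop with non-trivial exit'' count strictly decreases, and this component must be placed \emph{above} the $\plusw/\maxw$ count in the lexicographic order. Similarly the multiplication merge rule adds a $\dotw$ but removes a $\plusw$, which is fine if $\dotw$-count is below $\plusw/\maxw$-count in the order. Getting this ordering of priorities exactly right so that every one of the roughly fifteen rules decreases the tuple is the bulk of the work.

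\textbf{Local confluence.} With termination in hand, by Newman's Lemma I only need local confluence: whenever $w \mapsto w'$ and $w \mapsto w''$ via two (possibly different, possibly at overlapping positions) single rule applications, $w'$ and $w''$ have a common reduct. The clean way is to enumerate \emph{critical pairs} --- the situations where two redexes genuinely overlap. Rewrites at disjoint positions trivially commute, so the only cases to check are overlaps: e.g., a term of the form $(w_1\plusw w_2)\plusw(w_3\plusw w_4)$ where both~\ref{rw:assoc-plus1} and~\ref{rw:assoc-plus2} apply (both collapse to the fully right-associated $w_1\plusw w_2\plusw w_3\plusw w_4$, modulo the flattening convention); interaction of associativity with commutativity; the overlap of distributivity~\ref{rw:distributivity} with the commutativity/associativity of the inner $\plusw$ and outer $\maxw$; $\annw{\zerow}{(h,it)}$ inside an $\annw{\cdot}{(h,it)}\plusw\annw{\cdot}{(h,it)}$ redex; and the loop rule interacting with $\annw{\cdot}{}\plusw\annw{\cdot}{}$ or with the constant-combination rule. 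In each case I would exhibit the joining sequence explicitly. Since this rewriting system is essentially that of commutative-semiring / AC-normalisation augmented with a handful of domain-specific rules, confluence is expected to be routine once the associativity-commutativity machinery (working modulo AC, or equivalently with a fixed ordered-flattened normal form) is set up; the only place I would be careful is the distributivity rule, which only fires when the $w_3$ summand literally coincides on both sides of the $\maxw$, so I must confirm that no other rule can destroy that coincidence without also enabling~\ref{rw:distributivity} afterward. Concluding, termination plus local confluence yields convergence, and moreover the unique normal form $w'$ is then well-defined, justifying the phrasing of the lemma.

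\textbf{Main obstacle.} The hard part is the termination argument --- specifically, pinning down the precise lexicographic priority order among the component measures so that \emph{every} rule (especially the three ``creating'' rules: the loop rule~\ref{rw:loop2} which creates a $\plusw$, the multiplication-merge rule which creates a $\dotw$, and the annotation-factoring rule~\ref{rw:annot-plus} which is ``inward'' rather than ``outward'') strictly decreases the tuple while not increasing any higher-priority coordinate. Once the ordering is fixed, each rule check is mechanical, and local confluence is then comparatively straightforward.
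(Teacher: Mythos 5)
Your proposal follows essentially the same route as the paper: termination via a well-founded measure that strictly decreases under every rule application, and confluence via Newman's Lemma by reducing to local confluence of the overlapping (critical) pairs, which are then enumerated and joined. Your lexicographic decomposition of the termination measure is in fact more explicit than the paper's, which merely asserts that a suitable strict order $\prec$ can be assembled from per-rule observations (operator counts, parenthesis counts, the order $\avant$, and a weighted operator count for loop nodes), so there is no gap to report.
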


\begin{proof}
  $\mathcal{R}$ is convergent if it \emph{terminates} and it is
  \emph{confluent}. The reader can refer to \cite{baader98} for more
  detailed definitions and proof strategies that we use here.

  \emph{Termination.}
  We note that for each rule $l\mapsto r$ of $\mathcal{R}$, we have
  either of the following properties:
  \begin{itemize}
  \item Let $op(w)$ denote the sum of the number of operators
    $\plusw,\maxw,\dotw,|$ in $w$. Then, we have $op(l)<op(r)$ (for the
    following rules: distributivity, neutral element, multiplication
    with an integer, annotation);
  \item The number of parenthesis is less in $l$ than in $r$
    (for associativity rules);
  \item $l\avant r$ (for commutativity rules);
  \item Let us extend $op$ by defining
    $op((w_1,w_2,h)\poww{k})=(k+1)*(op(w_1)+op(w_2))$. Then
    $op(l)<op(r)$ (for loop rules).
  \end{itemize}

  Based on these properties, we can define a strict order
  relation $\prec$ on formulae such that, for each rule $l\mapsto r$ we have
  $l\prec r$. As a consequence $\mathcal{R}$ terminates.

  \emph{Confluence.}
  As $\mathcal{R}$ terminates, we only need to prove that its
  overlapping rules are locally confluent. Two rules $l_1\mapsto r_1$ and
  $l_2\mapsto r_2$ overlap
  if there exists a sub-term $s_1$ of $l_1$ (resp. $s_2$ of $l_2$) that
  is not a variable, and a unifier (a term substitution) $u$ such that
  $u(s_1)=u(l_2)$ (resp. $u(s_2)=u(l_1)$). Unification is applied after renaming variables such
  that $Vars(l_1)\cap Vars(l_2)=\emptyset$. For instance, rules
  \ref{rw:assoc-plus1} and \ref{rw:assoc-plus2} overlap: we have
  two different possible sequences of re-writings for formula
  $(w_1\plusw(w_2\plusw w_3))\plusw w_4$:
  \begin{align*}
    (w_1\plusw(w_2\plusw w_3))\plusw w_4 &\mapsto (w_1+w_2+w_3)+w_4 \text{(rule \ref{rw:assoc-plus2})}\\
    &\mapsto w_1+w_2+w_3+w_4 \quad \text{(rule \ref{rw:assoc-plus1})}\\\\
    (w_1\plusw(w_2\plusw w_3))\plusw w_4 &\mapsto w_1+(w_2+w_3)+w_4 \text{(rule \ref{rw:assoc-plus1})}\\
    &\mapsto w_1+w_2+w_3+w_4 \quad \text{(rule \ref{rw:assoc-plus2})}
  \end{align*}
  As both sequences produce the same formula, these overlapping rules
  are locally confluent.

  We do not detail the proof for the remaining overlapping rules, since
  it is very similar to the case we just presented. We only list them
  below:
  
  \begin{align*}
	  &(w_1\maxw(w_2\maxw w_3))\maxw w_4\quad &\text{(\ref{rw:assoc-max1} and \ref{rw:assoc-max2})}\\    
	  &(w_1\plusw w_2)\plusw w_3 \quad \text{if }w_2\avant w_1\quad &\text{(\ref{rw:assoc-plus1} and \ref{rw:commut-plus})}\\
	  &w_1\plusw (w_2\plusw w_3) \quad \text{if } w_3\avant w_2\quad &\text{(\ref{rw:assoc-plus2} and \ref{rw:commut-plus})}\\
	  &(w_1\maxw w_2)\maxw w_3 \quad \text{if } w_2\avant w_1\quad &\text{(\ref{rw:assoc-max1} and \ref{rw:commut-max})}\\
	  &w_1\maxw (w_2\maxw w_3) \quad \text{if } w_3\avant w_2\quad &\text{(\ref{rw:assoc-max2} and \ref{rw:commut-max})}\\
	  &(cst_1 \plusw (w_3\plusw w_4)) \maxw (cst_2 \plusw (w_3\plusw w_4)\quad
	&\text{(\ref{rw:assoc-plus2} and \ref{rw:distributivity})}\\
          &(cst_1 \plusw w_3) \maxw (cst_2 \plusw w_3)\quad \text{if }w_3\avant
	cst_1\vee w3\avant cst_2\quad &\text{(\ref{rw:commut-plus} and \ref{rw:distributivity})}\\
	  &(w_1\plusw\zerow)\plusw w_2\quad &\text{(\ref{rw:assoc-plus1} and \ref{rw:neutral-plus})}\\
	  &w_1\plusw(\zerow\plusw w_2)\quad &\text{(\ref{rw:assoc-plus2} and \ref{rw:neutral-plus})}\\
	  &(w_1\maxw\zerow)\maxw w_2\quad &\text{(\ref{rw:assoc-max1} and \ref{rw:neutral-max})}\\
	  &w_1\maxw(\zerow\maxw w_2)\quad &\text{(\ref{rw:assoc-max2} and \ref{rw:neutral-max})}\\
	  &\annw{w_1}{(h,it)} \plusw \annw{w_2}{(h,it)}\quad \text{if } w_2\avant w_1\quad &\text{(\ref{rw:annot-plus} and \ref{rw:commut-plus})}\\
  \end{align*}
  This concludes the proof. 
\end{proof}

To summarize, we enumerate below the steps of the computation of the WCET of a
program with our approach. Steps 1 to 4 correspond to the computation of
the parametric WCET formula. Steps 5 and 6 correspond to the
computation of the actual WCET for some specific parameter values:
\begin{enumerate}
\item Translate the program CFG to a CFT $t$;
\item Add extra-CFG analyses results as context annotations;
\item Compute $w=\gamma(t)$;
\item Simplify $w$ into $w'$ using rewriting rules;
\item Replace parameters by their values and obtain $w''$, with $w''=(l,\eta)$;
\item Return $\eta[1]$.
\end{enumerate}


\section{Experiments}
\label{sec:eval}

\begin{table}
  \centering
  \resizebox{\textwidth}{!}{
  \begin{tabular}{|c||c|c|c|c|}
    \hline
    \emph{Bench} & \emph{Source} & \emph{Parameter} & \emph{Algorithm} &
    \emph{Function}\\ \hline
    matmult  & ML & Matrix size & Matrix multiplication & \emph{Initialize (twice)}\\\hline
    cnt & ML & Matrix size & Matrix sum & \emph{Sum} \\ \hline
    fft      & TB & Number of samples & FFT & \emph{main}\\ \hline
    compress & ML & Data size & Data compression & \emph{main} \\\hline
    lift     & TB & Number of sensors & Factory lift control & \emph{main} \\ \hline
    adpcm    & ML & Trigo. computation steps & ADPCM encoding & \emph{main}\\ \hline
    aes\_enc & TB & Data size & AES encryption & \emph{main}\\ \hline
    powerwindow & TB & Sensor data input size & Car window control & \emph{main} \\ \hline
    fbw      & PB & Task activaction count & fly-by-wire & \emph{main} \\ \hline
    audiobeam & TB & Audio source count & Audio beamforming & \emph{main} \\ \hline
    mpeg2    & TB & Video resolution & MPEG2 decoding & \emph{main} \\ \hline
  \end{tabular}}
  \caption{Benchmarks summary}
  \label{tab:benchs}
\end{table}

The benchmarks we selected for our experiments are summarized in
Table~\ref{tab:benchs}. For each benchmark, we mention its source (ML
for M\"alardalen, TB for TACleBench, or PB for PapaBench), provide a
short description of the kind of algorithm it performs and specify the
function whose WCET is analyzed. We only introduce one parameter per
benchmark because 
precision is independent of the number of parameters in our
approach. The analyses have been executed on a PC with an Intel core
i5 3470 at 3.2 Ghz, with 8 Gb of RAM.  Every benchmark has been
compiled with ARM crosstool-NG 1.20.0 (gcc version 4.9.1) with
\textsf{-O1} optimization level.

\begin{table}[tbp]
\centering
\resizebox{\textwidth}{!}{
\begin{tabular}{| c || c || c | c || c | c | c || c | c | c | c |}
	\hline
	 \multicolumn{2}{|c||}{} & \multicolumn{2}{c||}{\emph{Formula
                                   size}} &
                                            \multicolumn{3}{c||}{\emph{Time
                                            (ms)}} & \multicolumn{4}{c|}{\emph{Pessimism (\%)}}  \\
	\hline
	\emph{Bench} & \emph{CFG} & \emph{Initial} & \emph{Final} &
	\emph{Common} & \emph{Us} & \emph{ILP} & \emph{Us} & \emph{Min} & \emph{Max} & \emph{MPA} \\
	\hline 
	matmult  & 111 & 130 & 5 & 1105 & 1 & 0  & 0.01 & 0.00 & 3.88 & 0.31 \\
	cnt      & 153 & 284 & 3 & 2278 & 2  & 8 & 0.15 & 0.00 & 3.59 & 30.4 \\
	fft      & 391 & 453 & 8 & 2968 & 4 & 16 & 0.00 & 0.00 & 1.51 & - \\
	compress & 694 & 906 & 3 & 4760 & 11 & 40  & 0.02 & 0.01 & 0.03 & - \\
	lift     & 814 & 1799 & 5 & 5130 & 19 & 40 & 1.51 & 0.05 & 2.29 & - \\
	adpcm    & 2032 & 2211 & 3 & 10688 & 67 & 272  & 0.01 & 0.01 & 0.33 & - \\
	aes\_enc & 2205 & 2651 & 2 & 4914 & 30 & 260 & 0.04 & 0.03 & 0.04 & - \\
	powerwindow & 3738 & 4453 & 24 & 45702 & 224 & 4192 & 0.01  & 0.01 & 1.43 & - \\
	fbw      & 10612 & 27251 & 2 & 36940 & 1198 & 8960 & 2.62 & 0.03  & 7.05 & - \\
	audiobeam & 12299 & 47248 & 37 & 56566 & 1222 & 12824 & 0.12 & 0.00 & 0.49 & - \\
	mpeg2    & 38612 & 1658109 & 3 & 267332 & 12221 & $>$ 1 week & -  & - & - & - \\
	\hline 
\end{tabular}
}
\caption{\label{fig:results}Benchmarking results}
\end{table}

The results of our experiments are shown in
Table~\ref{fig:results}. First, we detail the size of the WCET
formulae computed by our approach. Column \emph{CFG} shows the number
of basic blocks in the CFG. Column \emph{Initial} shows the size (the
number of operands) of the WCET formula before simplification, while
Column \substituted{\emph{Reduced}}{\emph{Final}} shows the formula
size after simplification. In most cases, the size of the
non-simplified formula, which also corresponds to the size of the CFT,
is close to the size of the CFG. Differences are due to the presence
of structure-breaking instructions (such as \emph{goto}, \emph{break},
\emph{continue}, \emph{return} in the middle of a function), which
force basic block aliasing in the CFG to CFT conversion
algorithm. This is especially true for the \emph{mpeg2}, and to a
lesser extent for \emph{lift}, \emph{audiobeam}, and \emph{fbw}
benchmarks. For all benchmarks, the size of the simplified formula is
very small and is related to the number of loops whose iteration count
depends on the parameter.

Then, we compare our approach with an IPET approach. Comparison is
performed according to two criteria: WCET analysis time, and pessimism
of the resulting WCET. The target hardware is an ARM processor with a
set-associative LRU instruction cache (the data cache is not taken
into account). The processor pipeline is analyzed with the exegraph
method~\cite{exegraph} and the instruction cache is modeled using
cache categorization~\cite{ferdinand}. The target instruction cache
used in the analysis \substituted{has $4$ way bits, $8$ row bits, and $4$ block
bits}{has $64$ Kbytes, $16$ ways, and blocks of $16$ bytes}. 
We chose a small cache to highlight the impact of the cache on
the execution time for such small benchmarks. The instruction cache
miss latency was assumed to be $10$ cycles.  Each benchmark is
analyzed as a standalone task, without any modeling of the operating
system. To perform the preliminary steps of the WCET analysis (program
path analysis, CFG building, loop bounds estimation, pipeline and
cache modeling), we rely on OTAWA (version 1.0), an open source WCET
computation tool~\cite{otawa}. These steps are common to the IPET
approach and to our approach. For the remaining steps, in the case of
the IPET approach, we use GNU lp\_solve ILP solver~\cite{lpsolve}. Our
approach was coded in Python, and executed with PyPy 2.4.0. We took
the mean time for 1000 executions of our algorithm, to compensate for
PyPy's slow start speed. To compare the WCET estimates, we instantiate
our WCET formula by assigning to the parameter the constant value used
in the IPET experiment.

The \emph{Common} column represents the time spent by OTAWA for the
preliminary steps (common to IPET and our approach), while the
\emph{Us} (our approach) and \emph{ILP} columns correspond to the time
spent for the remaining steps. The WCET evaluation time is essentially
linear in the size of the CFT in our approach and noticeably lower
than the evaluation time for the IPET approach. Notice that lp\_solve
did not find a solution for mpeg2 after one week of execution
time. Furthermore, let us emphasize that computing the WCET for
different parameter values with the IPET approach requires to run the
whole analysis (\emph{Common}+\emph{ILP}) for each parameter value,
while we only need to do the analysis (\emph{Common}+\emph{Us}) once
and then instantiate the formula for each parameter
value.

WCET pessimism is measured in comparison with the IPET result. The
\emph{Us} column represents the value of the pessimism with our
approach for a fixed value of the parameter (the same value as the one
used for the IPET approach). The \emph{Min} and \emph{Max} columns
represent respectively the minimum pessimism and maximum pessimism (in
percentage) for varying values of the parameter between $1$ and
$1000$.
We observed that, in general, the percentage of pessimism decreases
with the value of the parameter, approximately with an hyperbolic
shape. The pessimism of our approach is much lower than that of the
MPA approach (results extracted from \cite{mpa} are reported in column
MPA). It is also extremely low compared to the IPET
approach. Pessimism in our approach can be attributed to the following
causes: (1) the reduced expressiveness of our CFT annotations (as
opposed to ILP constraints) and (2) paths existing in the CFT but not
in the CFG. \added{Experiments show that the amount of pessimism does
  not depend on the size of the CFG.}


\section{Conclusion}
In this paper we presented a novel technique for parametric WCET
analysis, which follows a completely new approach based on symbolic
computation of WCET formulas. Experiments show very promising results:
execution time is lower than the traditional non-parametric IPET
technique and over-approximation of the WCET (compared to IPET) is
extremely low.


We believe that symbolic WCET computation paves the way for many
future works, ranging from purely static analyses, for instance, cache
analysis, to more complex dynamic analyses that will
help building adaptive real-time systems.

One of the main limitations of our method is that it is not
possible to specify constraints relating different parameters, which
may prevents some simplifications in the formulae. Furthermore, some
constraints used in IPET (i.e. some types of unfeasible paths)
cannot be easily represented with context annotations. We plan to
extend context annotations in future works to solve these issues.


\bibliographystyle{plain}
\bibliography{biblio}

\appendix

\section{CFG to CFT}
\label{app:CFG-CFT}

In this appendix, we prove the correctness of our translation from a CFG
to a CFT. Namely, we prove that any valid path in the CFG is also a
valid path in the CFT.

\subsection{Execution paths in a hierarchical DAG}

We have already defined the set of feasible execution paths for a CFG
($\gpaths{G, e}$) and for a CFT ($\tpaths{t}$). We will now define the
function $\dpaths{D}$ that returns the set of feasible paths of a
\emph{hierarchical} DAG $\mathcal{D}$. Since a DAG is a particular case
of graph, $\gpaths$ can also be applied to a DAG, however, an important
difference between both functions is that $\dpaths$ explores recursively
the sub-paths of hierarchical nodes appearing in the DAG.

\begin{mydef}
  Let $\mathcal{D}$ be a DAG. The set of execution paths of $\mathcal{D}$ is defined
  as:
  \[
    \dpaths{\mathcal{D}, e} = \bigcup_{p \in \gpaths{\mathcal{D}, e}} \spaths{p}
  \]
  \noindent where
  \begin{align*}
    \spaths{p.n}&=
    \begin{cases}
      \{ q = q_1 @ q_n | q_1 \in \spaths{p} \wedge
      q_n \in \vpaths{n} \}
      &\text{(if }n\text{ is hierarchical)}\\
      \{ q = q_1 . n   | q_1 \in \spaths{p}\} &\text{(otherwise)}
    \end{cases}\\
    \spaths{\epsilon} &= \{\epsilon\}
  \end{align*}
  \noindent and
  \begin{align*}
    \vpaths{L_h} = \{ &p = p_1 @ ... @ p_{x_h} @ p_e | \forall i, 1\leq i\leq x_h, p_i.h_{next} \in \dpaths{\mathcal{D}_h, h_{next}} \\
    & \wedge p_e.h_{exit} \in \dpaths{\mathcal{D}_h, h_{exit}} \}
  \end{align*}
  \indent where $\mathcal{D}_h, h_{next}$ and $h_{exit}$ are
  respectively the DAG, the next node and the exit node corresponding to
  hierarchical node $L_h$.
\end{mydef}

\subsection{Transformation correctness}

\begin{figure}[htbp]
  \centering
  \begin{tikzpicture}[scale=0.8]
    \tikzset{vertex/.style = {shape=circle,draw,minimum size=1.5em}}
    \tikzset{edge/.style = {->,> = latex'}}
    \tikzset{forced/.style = {shape=circle,draw,minimum size=1.5em,fill=gray!30}}
    \tikzset{pred/.style = {shape=circle,draw,minimum size=1.5em,pattern=north west lines,pattern color=gray!60}}
    \node[forced] (b1) at  (0,0) {$b_1$};
	\node[vertex] (a) [above right of=b1]  {$a$};
	\node[vertex] (b) [above right of=a]  {$b$};
	\node[vertex] (c) [right of=a] {$c$};
	\node[pred] (d) [right of=c] {$d$};
	\node[forced] (b2) [below right of=d]  {$b_2$};
	\node[vertex] (e) [below right of=b1] {$e$};
	\node[vertex] (f) [right of=e] {$f$};
	\node[pred] (g) [right of=f] {$g$};
	\node[pred] (h) [below right of=b2] {$h$};
	\node[pred] (i) [above right of=b2] {$i$};
	\node[forced] (b3) [above right of=h] {$b_3$};
	\node (entry) [left of=b1] {};
	\node (exit) [right of=b3] {};
	\draw[edge] (entry) to (b1);
	\draw[edge] (b3) to (exit);
	\draw[edge] (b1) to (a);
	\draw[edge] (a) to (b);
	\draw[edge] (a) to (c);
	\draw[edge] (b) to (d);
	\draw[edge] (c) to (d);
	\draw[edge] (d) to (b2);
	\draw[edge] (b1) to (e);
	\draw[edge] (e) to (f);
	\draw[edge] (f) to (g);
	\draw[edge] (g) to (b2);
	\draw[edge,bend left=60] (e) to (g);
	\draw[edge] (b2) to (h);
	\draw[edge] (b2) to (i);
	\draw[edge] (h) to (b3);
	\draw[edge] (i) to (b3);
	\node (d1a) [above left=22pt and 2pt of a] {};
	\node (d1b) [below right=2pt and 2pt of d] {};
	\coordinate[shift={(-6pt,6pt)}] (d1l) at (d1a.center);
	\draw [style=dotted] (d1a) rectangle (d1b);
	\node (d2a) [above left=4pt and 2pt of e] {};
	\node (d2b) [below right=2pt and 2pt of g] {};
	\coordinate[shift={(6pt,-6pt)}] (d2l) at (d2b.center);
	\draw [style=dotted] (d2a) rectangle (d2b);
	\node (d3a) [above left=2pt and 2pt of i] {};
	\node (d3b) [below right=2pt and 2pt of i] {};
	\coordinate[shift={(-6pt,6pt)}] (d3l) at (d3a.center);
	\draw [style=dotted] (d3a) rectangle (d3b);
	\node (d4a) [above left=2pt and 2pt of h] {};
	\node (d4b) [below right=2pt and 2pt of h] {};
	\coordinate[shift={(6pt,-6pt)}] (d4l) at (d4b.center);
	\draw [style=dotted] (d4a) rectangle (d4b);
	\draw (d1l) node {$D_{1,1}$};
	\draw (d2l) node {$D_{1,2}$};
	\draw (d3l) node {$D_{2,1}$};
	\draw (d4l) node {$D_{2,2}$};
  \end{tikzpicture}
  \caption{\label{fig:decomposing} Decomposing the DAG}
\end{figure}
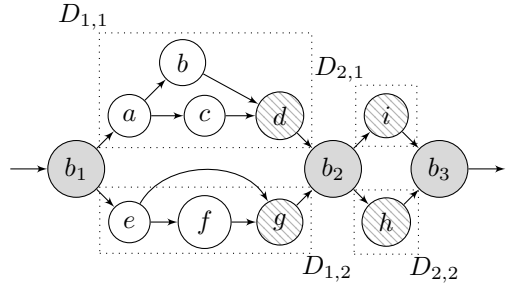

We will proceed in two steps: first we will establish a correspondence
between DAG execution paths and tree execution paths, then
between CFG execution paths and DAG execution paths.

We will now present a graph decomposition technique on which our proof
relies. Let $\mathcal{N}=\{b_1, \ldots,b_n\}$ denote the set of forced
passage nodes of $\mathcal{D}$ towards $\mathcal{D}_e$. Then,
$\mathcal{D}$ can be decomposed into a set of DAGs $\mathcal{D}_{i,j}$,
where $j = 1, \ldots, i_k$ is the $j$-th predecessor of $b_{i+1}$. DAG
$\mathcal{D}_{i,j}$ contains all nodes between $b_i$ (excluded) and the
$j$-th predecessor of $b_{i+1}$ (included), and all related edges. If
$b_i$ is a hierarchical node, we denote $Df_i$ the DAG representing the
corresponding loop (if $b_i$ is a basic block, $Df_i$ is not defined).

Figure~\ref{fig:decomposing} shows such a decomposition. In this
example, the forced passage nodes are shown in gray, and their
predecessors are represented by a striped pattern. The DAG is decomposed
into sub-DAGs $D_{1,1}$, $D_{1,2}$, $D_{2,1}$ and $D_{2,2}$ (plus a
single node DAG for each forced passage node).

\begin{lemma}\label{lm:dpath-in-tpath}
  Let $\mathcal{D}$ be a DAG. Let $t=\Pp(\mathcal{D}, \mathcal{D}_s,
  \mathcal{D}_e)$. We have:

  \[ \dpaths{D,\mathcal{D}_e}\subseteq \tpaths{t}\]
\end{lemma}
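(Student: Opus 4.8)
The plan is to proceed by induction on the structure of the DAG, following the recursive structure of the function \Pp\ itself, and using the graph decomposition $\mathcal{D} = \bigcup \mathcal{D}_{i,j}$ introduced just before the statement. The key observation is that the forced passage nodes $\mathcal{N} = \{b_1, \ldots, b_n\}$ towards $\mathcal{D}_e$ split any path $p \in \dpaths{\mathcal{D}, \mathcal{D}_e}$ into a concatenation $p = p_{0}' \cat b_1 \cat p_1' \cat b_2 \cat \ldots \cat b_n \cat p_n'$, where each $p_i'$ is a path through one of the sub-DAGs $\mathcal{D}_{i,j}$ (for the appropriate $j$). This mirrors exactly how \Pp\ builds the returned $\Seq$ node: the children $\mathsf{ch}$ consist, in order, of the $\Leaf(c)$ nodes for $c \in \mathcal{N}$, interleaved with $\Alt$ nodes collecting the recursive calls $\Pp(\mathcal{D}, ncd, p)$ over the predecessors $p$ of each branching forced node.

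First I would set up the induction measure — say, the number of nodes of $\mathcal{D}$ plus the sizes of all nested loop DAGs $Df_i$ — so that both the recursive calls $\Pp(\mathcal{D}, ncd, p)$ on strictly smaller sub-DAGs and the calls $\Pp(\mathcal{D}', h, n)$, $\Pp(\mathcal{D}', h, e)$ on inner loop DAGs are covered by the induction hypothesis. Then I would take an arbitrary $p \in \dpaths{\mathcal{D}, \mathcal{D}_e}$, unfold its definition through $\gpaths$ and $\spaths$, and decompose it along $\mathcal{N}$ as above. For the segments between consecutive forced nodes, the restriction of $p$ to $\mathcal{D}_{i,j}$ is a path from $ncd$ (the immediate dominator of $b_{i+1}$ in $\mathcal{N}$) to the $j$-th predecessor of $b_{i+1}$; by the induction hypothesis it lies in $\tpaths{\Pp(\mathcal{D}, ncd, p_j)}$, hence in $\tpaths{\Alt(\mathsf{br})}$ by the $\Alt$ clause of the definition of $\tpaths$. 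For a segment traversing a hierarchical node $L_h$, the corresponding factor of $p$ is by definition of $\vpaths{L_h}$ of the form $q_1 \cat \ldots \cat q_{x_h} \cat q_e$ with each $q_i.h_{next} \in \dpaths{\mathcal{D}_h, h_{next}}$ and $q_e.h_{exit} \in \dpaths{\mathcal{D}_h, h_{exit}}$; the induction hypothesis applied to $\mathsf{bd} = \Pp(\mathcal{D}_h, h, n)$ and $\mathsf{ex} = \Pp(\mathcal{D}_h, h, e)$ places each $q_i$ in $\tpaths{\mathsf{bd}}$ and $q_e$ in $\tpaths{\mathsf{ex}}$, so the whole factor lies in $\tpaths{\MyLoop(h, \mathsf{bd}, x_h, \mathsf{ex})}$ by the loop clause. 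Reassembling, $p$ is a concatenation of paths drawn one from each child of the $\Seq$ node returned by \Pp, so $p \in \tpaths{\Seq(\mathsf{ch})} = \tpaths{t}$ by the $\Seq$ clause. The base case, where $\mathsf{start}$ has no predecessors and $\mathcal{N}$ is empty, reduces to $t = \Seq(\Leaf(\mathsf{start}))$ and is immediate.

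The main obstacle I expect is not the inductive bookkeeping but justifying the decomposition rigorously: namely that every path in $\dpaths{\mathcal{D}, \mathcal{D}_e}$ really does pass through all of $b_1, \ldots, b_n$ in reverse-dominance order and can be uniquely cut into segments each living inside a single $\mathcal{D}_{i,j}$ (with the endpoints landing exactly on $ncd$ and the chosen predecessor). This requires care with the dominator-based definitions of Section~\ref{sub:ccfg} — in particular showing that the nodes strictly between two consecutive forced nodes that reach a given predecessor $p_j$ of $b_{i+1}$ are precisely the nodes of $\mathcal{D}_{i,j}$, so that no edge of $p$ "escapes" the sub-DAG it is assigned to — and handling the interaction with hierarchical nodes, whose internal paths are hidden from $\gpaths{\mathcal{D}, \mathcal{D}_e}$ but expanded by $\spaths$. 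I would isolate this as a preliminary structural claim about the decomposition and prove it once, so that the main induction can then invoke it cleanly.
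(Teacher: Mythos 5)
Your proposal follows essentially the same route as the paper's proof: an induction over the decomposition of $\mathcal{D}$ into the sub-DAGs $\mathcal{D}_{i,j}$ and the nested loop DAGs, cutting each path of $\dpaths{\mathcal{D},\mathcal{D}_e}$ at the forced passage nodes, matching the in-between segments to the $\Alt$ children and the hierarchical-node factors (via $\vpaths$) to the $\MyLoop$ children, and reassembling with the $\Seq$ clause. The "main obstacle" you flag — rigorously justifying that every path decomposes along $\mathcal{N}$ in reverse-dominance order into segments confined to single $\mathcal{D}_{i,j}$ — is in fact also left implicit in the paper's own proof, so your plan is, if anything, slightly more careful on that point.
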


\begin{proof}
  The proof is done by induction on the graph decomposition presented
  above. The base of the induction corresponds to the case where
  $\mathcal{D}$ consists only of a chain of forced passage basic
  blocks. Due to the definition of basic blocks though, this chain would always
  consist of a single basic block. Thus proving the induction base is
  trivial.

  Let us now prove the induction step. Let $t_{i,j}=\Pp(\mathcal{D}_{i,j}, \mathcal{D}_{{i,j}_s},
  \mathcal{D}_{{i,j}_e})$, for any appropriate values of $i$ and $j$.
  Let $tfb_i=\Pp(\mathcal{D}f_i, {\mathcal{D}f_i}_s,
  {\mathcal{D}f_i}_n)$, and let $tfe_i=\Pp(\mathcal{D}f_i,
  {\mathcal{D}f_i}_s, {\mathcal{D}f_i}_e)$.

  We must now prove the induction step: assuming Inclusions~\ref{dp1},
  \ref{dp2}, \ref{dp3}, prove Inclusion~\ref{dp4}.

  \begin{align}
    \forall i,j, \dpaths{\mathcal{D}_{i,j},\mathcal{D}_{{i,j}_e}}  &\subseteq \tpaths{t_{i,j}}\label{dp1}\\
    \forall i, \dpaths{\mathcal{D}f_i, {\mathcal{D}f_i}_n} & \subseteq
                                                             \tpaths{tfb_i}\label{dp2}\\
    \forall i, \dpaths{\mathcal{D}f_i, {\mathcal{D}f_i}_e} &\subseteq \tpaths{tfe_i}\label{dp3}\\
    \dpaths{\mathcal{D},\mathcal{D}_e} &\subseteq \tpaths{t}\label{dp4}
  \end{align}
  
  To simplify the notation, we will assume that each time a variable
  named $i$ is introduced in some equation in the proof, it is
  constrained to $1, \ldots, n$. Similarly, when $j$ is introduced, it
  is constrained to $1, \ldots, i_k$.

\medskip

For any path $p$ in $\dpaths{\mathcal{D}, \mathcal{D}_e}$, $p$ can be
expressed as $p=pf_1 @ p_1 @ pf_2 @ ... @ p_{n-1} @ pf_n$, where the
$pf_i$ terms are the path segments corresponding to the execution of
forced passage nodes, and $p_i$ terms are the path segments
corresponding to the execution between these forced passage nodes.

	For all $i$, if $b_i$ is a basic block, then let $tf_i$ =
        $\Leaf(b_i)$. Otherwise, let $tf_i = Loop(tfb_i,tfe_i)$.

	Let us show that $\forall i$, $pf_i \in \tpaths{tf_i}$.  If
        $b_i$ is a basic block, then we have
        $pf_i = \{ b_i \}\in\tpaths{tf_i}$.  If $b_i$ is a hierarchical
        node, then we have $pf_i \in \vpaths{b_i}$. Thanks to induction
        hypothesis,
        $\dpaths{\mathcal{D}f_i, {\mathcal{D}f_i}_n} \subseteq
        \tpaths{tfb_i}$
        and
        $\dpaths{\mathcal{D}f_i, {\mathcal{D}f_i}_e} \subseteq
        \tpaths{tfe_i}$.  Due to the definition of $\vpaths{b_i}$, $pf_i \in \tpaths{tf_i}$.

  
	We have $\forall i, \exists j, p_i \in \dpaths{\mathcal{D}_{i,j}, b_{i+1}}$.
	Thus, thanks to the induction hypothesis, $\forall i, \exists j,
        p_i \in \tpaths{t_{i,j}}$. Thanks to the definition of the function $\tpaths{}$ on the $\Alt$ node, we have $\forall i, p_i \in \tpaths{\Alt(t_{i,1}, ..., t_{i,{i_k}})}$.

  As a consequence, we have $p \in $ $ \tpaths{\Seq(tf_1, $ $ \Alt(t_{1,1}, $ $ ..., $ $ t_{1_k}), ..., $ $ \Alt(t_{{n-1},1}, ..., t_{{n-1},k}), tf_n)}$.

\medskip

  Now, we must prove that this corresponds to the structure of the tree
  built by our algorithm. By examining the algorithm, we see that $t$ is a $\Seq$ node, whose children list alternates between $\Leaf$ nodes representing the forced passage nodes, and $\Alt$ nodes (line \substituted{18}{16}) corresponding to possible paths between forced passage nodes.

  The tree representing the forced passage node $b_i$ is either $\Leaf(b_i)$, if $b_i$ is a basic block (line \substituted{20}{18}), or $\MyLoop(tfb_i, tfe_i)$,
  otherwise (line \substituted{23-26}{21-24}). The definition of this tree is thus that of $tf_i$.
  
  Furthermore, each child tree of one of the $\Alt$ nodes represents the paths between a \emph{forced passage} node, and a predecessor of the next \emph{forced passage} node (the test at line \substituted{6}{4} prevents the double counting of the \emph{forced passage} nodes).
  
  Therefore, we have $t = \Seq(tf_1, $ $ \Alt(t_{1,1}, ..., t_{1_k}),
  ...,$ $  \Alt(t_{{n-1},1}, ..., t_{{n-1},k}), $ $ tf_n)$. As a
  consequence, $p \in \tpaths{t}$ and finally $\dpaths{\mathcal{D}, \mathcal{D}_e} \subseteq \tpaths{t}$.
\end{proof}

Now we can proceed to the final correctness theorem.

\begin{theorem}
  \label{th:path-validity}
  Let $G$ be a CFG and let $G_e$ denote the exit node of $G$. Let
  $\mathcal{D}=DAG(G,\top)$ and let $t=\Pp(\mathcal{D},
  \mathcal{D}_s, \mathcal{D}_e)$. We have:
  \[ \gpaths{G,G_e} \subseteq \tpaths{t}\]
\end{theorem}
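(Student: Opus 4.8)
\end{theorem}

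\begin{proof}
The plan is to factor the statement through the hierarchical DAG. Since $\mathcal{D}=DAG(G,\top)$, Lemma~\ref{lm:dpath-in-tpath} already provides $\dpaths{\mathcal{D},\mathcal{D}_e}\subseteq\tpaths{t}$, so it suffices to establish the single inclusion $\gpaths{G,G_e}\subseteq\dpaths{\mathcal{D},\mathcal{D}_e}$ and compose the two. Concretely, the whole proof amounts to showing that every CFG execution path reaching $G_e$ can be reconstructed as an execution path of the top-level DAG, with each inner loop folded back into its hierarchical node.

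I would prove this inclusion by induction on the depth of the loop-nesting hierarchy, simultaneously proving the companion statement for the DAG $\mathcal{D}_h$ of every loop $l_h\in L_G$: every CFG sub-path that is an execution path of $l_h$ lies in $\vpaths{L_h}$, and every CFG sub-path that stays in $\body{h}$ and runs from $h$ to the target of a back-edge (resp. exit-edge) of $l_h$ lies in $\dpaths{\mathcal{D}_h,h_{next}}$ (resp. $\dpaths{\mathcal{D}_h,h_{exit}}$). The base case is a scope with no inner loop: then Algorithm~\ref{alg:dag} builds an ordinary DAG whose nodes are exactly the basic blocks immediately contained in the scope together with the virtual $next$ and $exit$ nodes, and whose edges are exactly the CFG edges between those blocks plus the edges induced by back-edges (towards $next$, line~6) and exit-edges (towards $exit$, line~7); a CFG path restricted to that scope is then literally a path of the DAG in the sense of $\gpaths$, and $\spaths$ acts as the identity because there is no hierarchical node to expand.

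For the induction step I would take a path $p$ --- from the entry to $G_e$ for the whole CFG, or an execution path of $l_h$ inside $\body{h}$ for an inner loop --- and cut it at every position where it crosses an entry-edge or an exit-edge of a loop immediately contained in the current scope. Reducibility guarantees that each loop header dominates its own body, so these cut points are unambiguous and split $p$ into an alternation of segments staying among the nodes immediately contained in the scope, and segments $p'$ that stay inside some inner body $\body{h'}$, run from the first to the last occurrence of $h'$, and are themselves execution paths of $l_{h'}$. Reading the first kind of segment with every inner loop collapsed to its hierarchical node $L_{h'}$, one checks that consecutive retained nodes are always joined by an edge of $\mathcal{D}$: an ordinary scope edge lies in $\mathcal{E}_d$ (line~3), a back-edge maps to an edge towards $next$ (line~6), the exit-edge used to leave $\body{h'}$ maps to $(L_{h'},b_j)$ (line~11), and the entry-edge used to dive into $l_{h'}$ maps to $(b_i,L_{h'})$ (line~12); hence the collapsed path is a member of $\gpaths{\mathcal{D},\mathcal{D}_e}$. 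By the induction hypothesis each segment $p'$ lies in $\vpaths{L_{h'}}$, so expanding every hierarchical-node occurrence of the collapsed path as prescribed by the definition of $\spaths$ recovers exactly $p$, whence $p\in\dpaths{\mathcal{D},\mathcal{D}_e}$; this closes the induction and, together with Lemma~\ref{lm:dpath-in-tpath}, the theorem.

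The step I expect to be the main obstacle is the two-sided bookkeeping inside this argument: on one side, verifying rigorously that the collapsed outer path only uses edges that Algorithm~\ref{alg:dag} actually inserts (the virtual $next$/$exit$ nodes and the hierarchical in/out edges), for which the precise definitions of back-edge, entry-edge, exit-edge and $\body{h}$ together with reducibility all have to be used carefully; on the other side, reconciling loop iteration counts, since a member of $\vpaths{L_h}$ is assembled from exactly $x_h$ body segments whereas a concrete CFG path may iterate $l_h$ strictly fewer times, so one must be explicit about the fact that only the inclusion direction is being claimed and argue that a shorter CFG loop execution is still represented. Everything else reduces to unfolding the definitions of $\gpaths$, $\dpaths$, $\spaths$ and $\vpaths$.
\end{proof}
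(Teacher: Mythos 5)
Your proposal is correct and follows exactly the same factorization as the paper: compose Lemma~\ref{lm:dpath-in-tpath} (which gives $\dpaths{\mathcal{D},\mathcal{D}_e}\subseteq\tpaths{t}$) with the inclusion $\gpaths{G,G_e}\subseteq\dpaths{\mathcal{D},\mathcal{D}_e}$. The only real difference lies in how the second inclusion is discharged: the paper dispatches it in one sentence by appealing to the classical correctness of reducing a reducible CFG into a hierarchy of DAGs (citing Hecht-style interval analysis) and ``takes it for granted,'' whereas you sketch an explicit induction on loop-nesting depth, cutting each CFG path at the entry- and exit-edges of immediately contained loops, collapsing the inner segments into hierarchical nodes, checking edge-by-edge against Algorithm~\ref{alg:dag}, and re-expanding via $\spaths{}$ and $\vpaths{L_h}$. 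Your version is therefore more self-contained than the paper's for that half of the argument, and the bookkeeping you describe is the right one. The iteration-count wrinkle you flag is genuine --- as literally defined, a member of $\vpaths{L_h}$ (and of $\tpaths{\MyLoop(h,t_b,x_h,t_e)}$) is assembled from exactly $x_h$ body segments, so a concrete path that iterates the loop fewer times is not obviously covered --- but this is an artifact of the paper's definitions rather than of your argument, and the paper's own proof does not address it either.
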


\begin{proof}
  Let $D=DAG(G,\top)$. All we need to prove now is that
  $\gpaths{G}\subseteq \dpaths{\mathcal{D},\mathcal{D}_e}$. The
  problem of reducing the CFG into a hierarchy of DAGs is a classical
  problem in compiler theory. Our method is similar to the one
  described in~\cite{hecht1972flow}, so we take its correctness for granted.
\end{proof}

\section{WCET correctness}

\label{app:WCET-correctness}

In this appendix, we show that the WCET obtained with our approach is
greater than the execution time of any feasible path in the CFT. Since
we also proved that any paths of the CFG is also a path of the CFT
obtained by our translation, these two properties ensure that the WCET
computed by our approach is greater than the execution time of any
feasible path in the CFG, which establishes the correctness of our
approach.

Let $\eval(\eta, e, n) \equiv \sum_{i=1}^n{(\eta \otimes e)[i]}$. We
want to prove that the WCET estimation for the program, provided by
function $\eval$, is an upper bound on the execution time of any path in
the tree $t$. The proof strategy is the following:

\begin{itemize}
\item We first define a property of the abstract WCET on a control-flow
  tree. The property is verified only if the abstract WCET is a valid
  representation of the tree's many possible execution times;

\item We then show that our function $\gamma$ provides an abstract
  WCET which verifies the property mentioned above;

\item Finally, we show that this property implies that the WCET estimation
  for the program is an actual upper bound.
\end{itemize}

We start by introducing an helper function \prep\ (for \emph{path
repetition}). It is a generalization of \tpaths{}\ that computes all
the paths in $n$ repetitions of $t$, considering that an external loop $l$
of $t$ has been entered $e$ times:
\begin{mydef}
  \label{def:prep}
  Let $\prep(t, e, n)$ be defined as follows:
  \begin{align*}
    \prep(t, e, n) = \{p|&\exists p_1, \ldots, p_n \in \tpaths{t},p = p_1\cat\ldots\cat p_n , \\
    &\forall (t^{\prime}, l, m) \in \annSet(t), l \notin t \implies \occ(\tpaths{t^{\prime}}, p) \le e \cdot m\}
  \end{align*}
\end{mydef}

If $t$ is the whole program, then $\prep(t, 1, 1) = \tpaths{t}$ (in that
case, there is no loop containing $t$, so all annotations in
$\annSet(t)$ refer to loops inside $t$).

We are now ready to state our predicate.
\begin{mydef}
  \label{def:propV}
  $V(t, \eta)$ is a predicate representing the fact that $\eta$ is
  a valid abstract WCETs for control-flow tree $t$:
  \begin{align*}
    V(t, \eta) \equiv \forall e, n \in \mathbb{N}, p \in \prep(t, e, n), \wcet{p} \le \eval(\eta , e, n) 
  \end{align*}	
\end{mydef}

This property is actually a generalization of
the property we want to prove, i.e. that $\eval(\eta, 1, 1)=\eta[1]$ is a correct upper
bound for any possible execution of a tree $t$.

Then, the following theorem states that the function $\gamma$ computes an abstract WCET that satisfies the property $V$.
\begin{theorem}\label{lm:eval-valid}
  $\forall t \in \mathcal{T}, (l, \eta) = \gamma(t) \implies V(t, \gamma(t))$.  
\end{theorem}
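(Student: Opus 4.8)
The plan is to prove the statement by structural induction on the control-flow tree $t$, mirroring the recursive definitions of $\gamma$ and $\omega$. Throughout I write $(l_\omega,\eta_\omega)=\omega(t)$ and $(t,l_2,n_{ann})=\ann(t)$, so that $\gamma(t)=(l_\omega\sqcap l_2,\ \eta_\omega|_{n_{ann}})$. The induction hypothesis is that $V(t_i,\eta_i)$ holds for every child $t_i$ of $t$, where $(l_i,\eta_i)=\gamma(t_i)$.

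\emph{Reducing $\gamma$ to $\omega$, and the base case.} First I would show that it suffices to establish $V(t,\eta_\omega)$. If the annotation on $t$ is non-null then $l_2$ is a loop strictly containing $t$, hence $l_2\notin t$; so for any $e,N$ and any $p\in\prep(t,e,N)$ the $\prep$ constraint gives $\occ(\tpaths{t},p)\le e\cdot n_{ann}$, and since $p$ is a concatenation of $N$ members of $\tpaths{t}$ we get $N\le e\cdot n_{ann}$. Because $\eta_\omega|_{n_{ann}}$ agrees with $\eta_\omega$ on its $n_{ann}$ greatest elements, $(\eta_\omega|_{n_{ann}})\otimes e$ agrees with $\eta_\omega\otimes e$ on its $e\cdot n_{ann}\ge N$ greatest elements, so $\eval(\eta_\omega|_{n_{ann}},e,N)=\eval(\eta_\omega,e,N)$; thus $V(t,\eta_\omega)$ implies $V(t,\eta_\omega|_{n_{ann}})$ (and when the annotation is null, $\eta_\omega|_\infty=\eta_\omega$). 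For $t=\Leaf(b)$ we have $\omega(t)=(\top,\{\wcet{b}\}\otimes\infty)$, every $p\in\prep(\Leaf(b),e,N)$ is $b$ repeated $N$ times, and $\wcet{p}=N\cdot\wcet{b}=\eval(\{\wcet{b}\}\otimes\infty,e,N)$, so the property holds with equality.

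\emph{Inductive cases.} The common scheme is: pick $p\in\prep(t,e,N)$; the definitions of $\tpaths{}$ and $\prep$ decompose $p$ canonically into sub-paths belonging to the children; each maximal sub-collection associated with a child $t_i$ again lies in an appropriate instance $\prep(t_i,\cdot,\cdot)$ — here one checks the annotation constraints transfer, which is immediate because $\annSet(t_i)\subseteq\annSet(t)$ and an annotation whose loop is external to $t$ is a fortiori external to $t_i$; by the induction hypothesis the WCET of each sub-collection is bounded by the corresponding $\eval$; and one sums. The arithmetic facts that close each case are: for $\Seq$, $\eval(\cdot,e,N)$ is additive over $\oplus$; for $\Alt$, for any split $N=n_1+\dots+n_k$ one has $\sum_i\eval(\eta_i,e,n_i)\le\eval(\eta_1\uplus\dots\uplus\eta_k,e,N)$; and for $\MyLoop(h,t_1,x_h,t_2)$, the $N$ exit sub-paths form an element of $\prep(t_2,e,N)$ (contributing $\eval(\eta_2,e,N)$, combined via $\oplus$), while the body sub-paths are handled in the two cases of the definition of $\omega$ on loops: when $l_h\equiv l_1$ each of the $N$ executions of the loop node contributes at most $\eval(\eta_1,1,x_h)=\sum_{i=0}^{x_h-1}\eta_1[i]$, and when $l_h\not\equiv l_1$ the $Nx_h$ body executions are bounded by $\eval(\eta_1,e,Nx_h)$, which equals $\eval(\eta',e,N)$ for the block-of-$x_h$ grouping $\eta'$ appearing in the definition.

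\emph{Main obstacle.} The delicate points are the $\Alt$ inequality and the $\MyLoop$ case. The $\uplus$-inequality is precisely the reason for the non-standard definition of $\uplus$: proving it is a rearrangement argument showing that merging sorted multi-sets and taking the $N$ greatest elements dominates any selection of $n_i$ greatest elements from each child with $\sum_i n_i=N$, and care is needed with the convention that the least element of a multi-set carries infinite multiplicity and with the truncation built into $\uplus$. For $\MyLoop$ one must choose the recursion parameters on $t_1$ so that the internal loop-annotation bounds inherited from $\tpaths{\MyLoop(\dots)}$ (bound $m$) line up with the external-loop bounds used by $\prep$ (bound $e\cdot m$); this is what forces the split into the two sub-cases, since in the first the loop bound $x_h$ is absorbed into a constant contributed once per loop execution, whereas in the second it is exposed across the iterations of the enclosing loop. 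The remaining verifications are routine multi-set arithmetic and would be omitted from the sketch.
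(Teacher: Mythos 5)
Your proposal is correct and follows essentially the same route as the paper: structural induction on the CFT, a preliminary reduction showing that validity for $\omega(t)$ implies validity for $\gamma(t)$ (the paper's Lemma~\ref{lm:V-annotation-on-root}), and then the same three key facts per constructor --- additivity of $\eval$ over $\oplus$ for $\Seq$, the $\uplus$ domination inequality over all splits $N=n_1+\dots+n_k$ for $\Alt$, and the two sub-cases $l_h\equiv l_1$ versus $l_h\not\equiv l_1$ for $\MyLoop$, the latter relying on the observation (the paper's Lemma~\ref{lm:most-internal-loop}) that external annotation loops cannot be more internal than the computed loop component. Your justification of the root-annotation reduction via $N\le e\cdot n_{ann}$ and vacuity otherwise is, if anything, slightly more direct than the paper's $M=\max(e\cdot k,n)$ argument, but it is not a different method.
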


First, we state a property on
$\gamma$ that will be useful later during the proof.
\begin{lemma}\label{lm:most-internal-loop}
  Let $\gamma(t) = (l,\eta)$. Then:
  \[ 
  \forall (t^{\prime}, l^{\prime}, m) \in \annSet(t), l^{\prime} \notin t \implies l \sqsubseteq l^{\prime}.
  \]
\end{lemma}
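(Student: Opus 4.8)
The plan is to prove the statement by structural induction on the control-flow tree $t$, following the inductive definition of $\gamma$ and $\omega$. The property carried along the induction is precisely the conclusion of the lemma: if $\gamma(t)=(l,\eta)$, then every annotation $(t',l',m)\in\annSet(t)$ with $l'\notin t$ satisfies $l\sqsubseteq l'$. Throughout I write $l_{\ann}$ for the loop component of $\ann(t)$, so that, by definition, the loop component of $\gamma(t)$ is $l=l_{\omega}\sqcap l_{\ann}$ where $(l_{\omega},\eta_{\omega})=\omega(t)$, and $l_{\ann}$ is either $\top$ (no annotation) or a loop whose body contains $t$.

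The base case $t=\Leaf(b)$ is immediate: $\omega(t)=(\top,\cdot)$, so $l=l_{\ann}$, and the only annotation in $\annSet(\Leaf(b))$ is $\ann(t)$ itself, for which $l=l_{\ann}\sqsubseteq l_{\ann}$. The cases $t=\Alt(t_1,\dots,t_n)$ and $t=\Seq(t_1,\dots,t_n)$ are handled uniformly, since in both $l_{\omega}=l_1\sqcap\dots\sqcap l_n$ with $(l_i,\eta_i)=\gamma(t_i)$. An annotation $(t',l',m)\in\annSet(t)$ with $l'\notin t$ is either $\ann(t)$ — then $l'=l_{\ann}$ and $l\sqsubseteq l_{\ann}=l'$ because $\sqcap$ is a lower bound — or it lies in $\annSet(t_i)$ for some $i$; as $t_i$ is a subtree of $t$, $l'\notin t$ forces $l'\notin t_i$, so the induction hypothesis on $t_i$ gives $l_i\sqsubseteq l'$ and hence $l\sqsubseteq l_i\sqsubseteq l'$ by transitivity. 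The same reasoning covers $t=\MyLoop(h,t_1,x_h,t_2)$ in the branch $l_1\not\equiv l_h$, where $l_{\omega}=l_1\sqcap l_2$, splitting according to whether the annotation comes from $t_1$, from $t_2$, or equals $\ann(t)$.

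The only genuinely delicate case, which I expect to be the main obstacle, is $t=\MyLoop(h,t_1,x_h,t_2)$ when $l_h\equiv l_1$: here $\omega(t)=(l_2,\cdot)$, so the body's loop component $l_1=l_h$ is dropped. For an annotation $(t',l',m)$ coming from the body $t_1$ with $l'\notin t$, the induction hypothesis on $t_1$ only yields $l_h\sqsubseteq l'$, which is not yet of the required form $l=l_2\sqcap l_{\ann}\sqsubseteq l'$. To close it I would use: (i) since $l'\notin t$ while the loop node $l_h$ is $t$ itself, $l'\neq l_h$, so $l'$ is a strict ancestor loop of $l_h$; (ii) $l_{\ann}$ is, by the definition of context annotations, also a loop containing $t$, hence also an ancestor of $l_h$; (iii) the loops containing a fixed loop form a chain (loop nesting is a forest), so $l'$ and $l_{\ann}$ are comparable; and (iv) the exit subtree $t_2$ sits under exactly the same external loops as $l_h$, so the induction hypothesis on $t_2$ can be leveraged to bound $l_2$. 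From these one concludes $l_2\sqsubseteq l'$ (or $l_{\ann}\sqsubseteq l'$), hence $l=l_2\sqcap l_{\ann}\sqsubseteq l'$. Making step (iv) precise — essentially, that every external loop appearing in an annotation anywhere in $t$ lies on the single ancestor chain of the innermost loop containing $t$ — is where the bookkeeping is non-trivial, and if the bare invariant above proves too weak I would strengthen it to record exactly this chain property.
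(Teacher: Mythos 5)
Your treatment of the easy cases ($\Leaf$, $\Alt$, $\Seq$, and the $\MyLoop$ branch with $l_1\not\equiv l_h$) is correct, and it is essentially all the paper's own proof contains: the paper disposes of the lemma in two sentences by observing that $l$ is ``always computed as the intersection between external loops,'' which is exactly your greatest-lower-bound argument. You have also correctly put your finger on the one place where that observation is literally false, namely the branch $l_1\equiv l_h$ of $\omega(\MyLoop(h,t_1,x_h,t_2))$, where the body's component $l_1$ is discarded and the loop component of $\gamma(t)$ becomes $l_2\sqcap l_{\ann}$, so nothing ties it to the loops mentioned by annotations sitting inside $t_1$. The paper's proof silently skips this branch; your analysis is the more honest one.

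However, your steps (i)--(iv) do not close that branch. Step (iv) is a non sequitur: the induction hypothesis on $t_2$ bounds $l_2$ only by the external loops occurring in annotations \emph{of $t_2$}, and says nothing about a loop $l'$ that occurs only in an annotation of $t_1$; and the comparability of $l'$ and $l_{\ann}$ from step (iii) does not tell you which of the two is inner. Under a literal reading of the definitions the desired conclusion can actually fail here, which is why no bookkeeping about ancestor chains will rescue the argument. Take $t_1=\Seq(\Leaf(b),\Leaf(b''))$ with $\ann(\Leaf(b))=(b,l_g,m)$ for a loop $l_g$ strictly containing $l_h$ (a triangular-loop style annotation) and $\ann(\Leaf(b''))=(b'',l_h,1)$ (a persistence annotation). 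Then $l_1=l_g\sqcap l_h=l_h$, so the branch $l_1\equiv l_h$ fires; with $t_2$ and the root of $t$ unannotated one gets $l=\top$, while the lemma demands $\top\sqsubseteq l_g$, i.e.\ $l_g=\top$, which is false. A correct treatment therefore has to do more than strengthen the induction invariant: one must either argue that such mixed annotations cannot arise, or change the definition so that the $l_1\equiv l_h$ branch also returns $l_1\sqcap l_2$ as its loop component, in which case your uniform intersection argument goes through for every case.
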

\begin{proof}
  By definition of $\gamma$ and $\omega$, $l$ is always computed as
  the intersection between external loops. So, it can never happen
  that $l$ refers to a loop that is more external than a loop
  contained within an annotation in $t$.
\end{proof}

We prove the theorem by induction on the structure of the control-flow
tree. We start by proving that, if the property is valid for the result of
$\omega$, then it is also valid for the result of $\gamma$.
\begin{lemma}\label{lm:V-annotation-on-root}
  Let $t$ be a control-flow tree, and let $\ann(t) = (t, l_1, k)$ be
  its annotation. Let $t^\prime$ be the same tree on which the
  annotation on $t$ has been replaced by the empty annotation
  $(t^\prime, \top, \infty)$.  
  Let $\omega(t^\prime) = (l^\prime, \eta^\prime)$ and let $\gamma(t)
  = (l, \eta)$. Then:
  \[
  V(t^\prime, \eta^\prime) \implies V(t, \eta) 
  \]
\end{lemma}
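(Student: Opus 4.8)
I would prove Lemma~\ref{lm:V-annotation-on-root} by unfolding the definitions of $\prep$ and $\eval$ and showing that passing from $t'$ to $t$ can only \emph{restrict} the set of paths while the evaluation $\eval(\eta,e,n)$ remains an upper bound on $\eval(\eta',e,n)$ for the paths that survive. Recall that $\gamma(t)=(l,\eta)$ is obtained from $\omega(t') = (l',\eta')$ by $l = l' \sqcap l_1$ and $\eta = \eta'|_k$ (where $\ann(t)=(t,l_1,k)$), since $\omega(t)=\omega(t')$ when only the root annotation differs. So the whole content of the lemma is: truncating the multiset to its $k$ largest elements (plus zeros) is still a valid abstract WCET, \emph{because} the extra annotation $(t,l_1,k)$ forbids exactly those paths that would have needed the discarded, smaller entries.

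\textbf{Key steps.} First I would fix $e,n\in\mathbb{N}$ and a path $p \in \prep(t,e,n)$, and observe that $p\in\prep(t',e,n)$ as well: the annotation set $\annSet(t)$ contains $\annSet(t')$ plus the root annotation $(t,l_1,k)$, so the membership condition for $\prep(t,\cdot,\cdot)$ is \emph{stronger}; hence by the hypothesis $V(t',\eta')$ we get $\wcet{p}\le \eval(\eta',e,n)$. Second, I would split on whether $l_1$ is contained in $t$ or not. If $l_1\in t$, the root annotation does not appear as a "$l\notin t$" constraint in either $\prep$, so essentially nothing changes and I would argue that truncation to $\eta'|_k$ still dominates — but here I must be careful: this is the case where the truncation could in principle lose information, so the argument must use that $p$ decomposes into $\le k$ occurrences of $\tpaths{t}$ within each entry of $l_1$ (via the altered definition of $\tpaths{\MyLoop(\ldots)}$ with annotations), which bounds how many copies of $t$'s execution time actually contribute. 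If $l_1\notin t$, then the constraint "$\occ(\tpaths{t},p)\le e\cdot k$" is active in $\prep(t,e,n)$, and I would combine it with Lemma~\ref{lm:most-internal-loop} (which gives $l\sqsubseteq l_1$, so $l_1$ really is an external loop consistent with the bookkeeping) to show that $p$ uses at most $e\cdot k$ "full" copies of $t$, so that the contribution from ranks beyond position $k$ (per entry of $l_1$) can be dropped: $\eval(\eta'|_k, e, n) = \eval(\eta, e, n)$ still dominates $\wcet{p}$.

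\textbf{The main obstacle.} The delicate point is making precise the counting argument that links $\occ(\tpaths{t},p)$ — the number of times the \emph{subtree's} execution paths occur inside $p$ — to which ranks of the multiset $\eta'$ are actually "used" by $p$ when evaluated via $\eval(\eta'\otimes e, \cdot)$. The definitions of $\prep$ and $\eval$ are phrased in terms of path concatenation and sorted multisets respectively, and bridging them requires a small combinatorial argument (essentially: among $n$ concatenated copies of $\tpaths{t}$, restricted so that no more than $e\cdot k$ of them can be "the expensive path", the worst total is exactly the sum of the $n$ largest entries of $\eta'|_k \otimes e$). I would isolate this as the crux of the proof and verify that it matches the intended reading of $\eta|_n$ ("the $n$ greatest elements and an infinite number of zeros") so that $\eval(\eta'|_k,e,n)\ge\wcet{p}$ follows. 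Once that counting fact is in hand, the rest is routine substitution of the definitions of $\gamma$ and $\eval$.
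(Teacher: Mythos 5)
Your skeleton is the right one and matches the paper's intent: reduce $V(t,\eta)$ to $V(t',\eta')$ by relating $\prep(t,e,n)$ to $\prep(t',e,\cdot)$ and $\eval(\eta,e,n)=\eval(\eta'|_k,e,n)$ to $\eval(\eta',e,\cdot)$, and your first step ($\prep(t,e,n)\subseteq\prep(t',e,n)$, hence $\wcet{p}\le\eval(\eta',e,n)$ for every surviving $p$) is correct. However, the step you yourself flag as the crux is left open, and the version of it you sketch would not close the gap. You read the root constraint as ``at most $e\cdot k$ of the $n$ concatenated copies may be \emph{the expensive path}'' and claim the worst total is then the sum of the $n$ largest entries of $\eta'|_k\otimes e$. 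That claim is false under that reading: the remaining $n-e\cdot k$ copies still execute \emph{some} path of $t$ and contribute its cost (possibly only slightly below the maximum), whereas $\eta'|_k$ replaces everything beyond rank $k$ by zeros. For instance with $\eta'=\{10,9,9,\dots\}$, $k=e=1$, $n=2$, your bound gives $10$ while such a path would cost $19$.

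What actually saves the lemma is that $\occ(\tpaths{t},p)$ counts \emph{every} occurrence of a path of $\tpaths{t}$ inside $p$, and each of the $n$ factors $p_1,\dots,p_n$ of a path $p\in\prep(t,e,n)$ is such an occurrence; since the root annotation $(t,l_1,k)$ refers to an external loop, the constraint $\occ(\tpaths{t},p)\le e\cdot k$ is active and forces $n\le e\cdot k$, i.e.\ $\prep(t,e,n)=\emptyset$ for $n>e\cdot k$ and $V(t,\eta)$ is vacuous there. For $n\le e\cdot k$ the truncation is invisible, $\eval(\eta'|_k,e,n)=\eval(\eta',e,n)$, and your first step finishes the proof. (The paper compresses both cases into a single substitution $M=\max(e\cdot k,n)$ and asserts $\prep(t,e,n)=\prep(t',e,M)$ and $\eval(\eta,e,n)=\eval(\eta',e,M)$ ``by definition''; your inclusion-plus-truncation route is equally valid, but only once the counting fact is stated as a bound on the \emph{total} number of executions of $t$, not on the number of expensive ones.) Finally, your case split on whether $l_1$ is contained in $t$ is unnecessary: by definition a context annotation always refers to a loop external to the annotated tree.
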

\begin{proof}
  Clearly, $\gamma(t^\prime) = \omega(t^\prime) = \omega(t)$ because
  function $\omega$ does not consider the annotation on the root of
  $t$.

  For all $e, n \in \mathbb{N}$, let $M = \max(e \cdot k, n)$. 

  \begin{enumerate}
  \item by definition, $\prep(t,e,n) = \prep(t^\prime, e, M)$;
  \item by definition, $\eval(\eta,e,n) = \eval(\eta^\prime, e, M)$.
  \end{enumerate}  
  From item 1, it follows that $\forall p \in \prep(t, e, n)$ we have
  also that $p \in \prep(t^\prime, e, M)$. 

  From $V(t^\prime, \eta^\prime)$, it follows that $\wcet{p} \le
  \eval(\eta^\prime, e, M)$. From item 2, $\eval(\eta^\prime, e, M) =
  \eval(\eta,e,n)$ which proves the lemma.
\end{proof}

To prove Theorem~\ref{lm:eval-valid}, we consider each case of the
inductive definition of the CFT separately (\Seq, \Alt, \MyLoop).
\begin{lemma}\label{lm:eval-seq}
  Let $t = \Seq(t_1, t_2)$, and let $(l, \eta) = \gamma(t)$, $(l_1,
  \eta_1) = \gamma(t_1)$, and $(l_2, \eta_2) = \gamma(t_2)$.  
  Then,
  \[
  \forall e,n \in \mathbb{N},\quad V(t_1, \eta_1) \wedge V(t_2, \eta_2) \implies V(t, \eta)
  \]
\end{lemma}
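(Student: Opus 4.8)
The plan is to prove the implication $V(t_1,\eta_1)\wedge V(t_2,\eta_2)\implies V(t,\eta)$ by unpacking both sides of the definition of $V$ and using the way $\omega$ (hence $\gamma$, via Lemma~\ref{lm:V-annotation-on-root}) handles a \Seq\ node. By Lemma~\ref{lm:V-annotation-on-root} it suffices to prove the property for $\eta'$ where $(l',\eta')=\omega(t)$, and by definition of $\omega$ on \Seq\ we have $\eta' = \eta_1'\oplus\eta_2'$ where $(l_i',\eta_i')=\gamma(t_i)$ (so $\eta_i'=\eta_i$). Recall $\oplus$ adds the two multi-sets rank by rank: $(\eta_1\oplus\eta_2)[i]=\eta_1[i]+\eta_2[i]$. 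The goal reduces to: for all $e,n$ and all $p\in\prep(t,e,n)$, $\wcet{p}\le\eval(\eta_1\oplus\eta_2,e,n)$.

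First I would fix $e,n$ and take an arbitrary $p\in\prep(\Seq(t_1,t_2),e,n)$. By Definition~\ref{def:prep} and the definition of $\tpaths{\Seq(t_1,t_2)}$, $p$ decomposes as $p = q_1\cat r_1\cat q_2\cat r_2\cat\dots\cat q_n\cat r_n$ where each $q_j\in\tpaths{t_1}$ and each $r_j\in\tpaths{t_2}$; regrouping, $p_a = q_1\cat\dots\cat q_n$ is a concatenation of $n$ paths of $t_1$ and $p_b=r_1\cat\dots\cat r_n$ a concatenation of $n$ paths of $t_2$, and $\wcet{p}=\wcet{p_a}+\wcet{p_b}$. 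The next step is to check that $p_a\in\prep(t_1,e,n)$ and $p_b\in\prep(t_2,e,n)$: the path-shape condition is immediate, and the annotation condition requires care, since an annotation $(t'',l,m)\in\annSet(t_1)$ with $l\notin t_1$ must satisfy $\occ(\tpaths{t''},p_a)\le e\cdot m$. This follows because $\annSet(t_1)\subseteq\annSet(\Seq(t_1,t_2))$, $l\notin t_1\Rightarrow l\notin\Seq(t_1,t_2)$ (the loop $l$ is external to $t$), so the defining condition of $\prep(t,e,n)$ gives $\occ(\tpaths{t''},p)\le e\cdot m$, and occurrences of a path of $t''\subseteq t_1$ inside $p$ all land inside $p_a$ (here I use the disjointness-of-labels remark from Section~\ref{sub:expressiontree}, which guarantees paths of $t_1$ and $t_2$ use distinct leaf labels). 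Then $V(t_1,\eta_1)$ gives $\wcet{p_a}\le\eval(\eta_1,e,n)=\sum_{i=1}^n(\eta_1\otimes e)[i]$, and similarly $\wcet{p_b}\le\sum_{i=1}^n(\eta_2\otimes e)[i]$.

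It remains to combine these two bounds into $\eval(\eta_1\oplus\eta_2,e,n)=\sum_{i=1}^n((\eta_1\oplus\eta_2)\otimes e)[i]$. The key arithmetic fact is that $(\eta_1\oplus\eta_2)\otimes e = (\eta_1\otimes e)\oplus(\eta_2\otimes e)$ — scaling multiplicities commutes with rank-wise addition — together with $\sum_{i=1}^n(\zeta_1\oplus\zeta_2)[i] = \sum_{i=1}^n\zeta_1[i]+\sum_{i=1}^n\zeta_2[i]$, which is immediate from $(\zeta_1\oplus\zeta_2)[i]=\zeta_1[i]+\zeta_2[i]$. Hence $\wcet{p}=\wcet{p_a}+\wcet{p_b}\le\eval(\eta_1,e,n)+\eval(\eta_2,e,n)=\eval(\eta_1\oplus\eta_2,e,n)$, which is exactly $\eval(\eta',e,n)$; Lemma~\ref{lm:V-annotation-on-root} then upgrades this to $V(t,\eta)$.

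I expect the main obstacle to be the bookkeeping in the step "$p_a\in\prep(t_1,e,n)$": one must be careful that the annotation constraints carried by $\annSet(t_1)$ are not violated by the regrouping, and that occurrences of sub-paths of $t_1$ really do not leak into the $t_2$-part of $p$ (this is where the unique-leaf-label convention is essential, otherwise $\occ$ could over-count). The pure arithmetic on multi-sets — distributing $\otimes e$ over $\oplus$ and summing ranks — is routine. A secondary subtlety is whether the $n$ used in the decomposition of $p$ matches the $n$ in $\prep(t_1,e,n)$ and $\prep(t_2,e,n)$ exactly; by construction of $\tpaths{\Seq(\cdot)}$ the same $n$ is shared, so this is fine, but it is worth stating explicitly.
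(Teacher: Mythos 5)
Your proof is correct and follows essentially the same route as the paper's: reduce to $\omega(t)$ via Lemma~\ref{lm:V-annotation-on-root}, split a path of $\prep(t,e,n)$ into a $t_1$-part and a $t_2$-part, apply the induction hypotheses, and combine using the rank-wise additivity of $\oplus$ under $\otimes e$. You are in fact more careful than the paper on one point it glosses over --- verifying that the regrouped paths $p_a,p_b$ actually satisfy the annotation conditions of $\prep(t_1,e,n)$ and $\prep(t_2,e,n)$, using the unique-leaf-label convention --- which is a welcome addition rather than a divergence.
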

\begin{proof}
  Let $t^\prime$ be the same tree as $t$ but without the annotation on
  the root node and let $(l^\prime, \eta^\prime) = \omega(t^\prime)$. 

  By definition of function $\eval$, we have:
  \begin{align*}
    \eval(\eta^\prime, e, n) &= \sum_{i=0}^{n-1} (\eta^\prime \otimes e)[i] =
     \sum_{i=0}^{n-1} ((\eta_1 \oplus \eta_2) \otimes e)[i] = 
     \sum_{i=0}^{n-1} (\eta_1 \otimes e)[i] + \sum_{i=0}^{n-1} (\eta_2 \otimes e)[i] \\
    &= \eval(\eta_1, e, n) + \eval(\eta_2, e, n).
  \end{align*}
  By definition of predicate $V$:
  \begin{align*}
    \forall p_1 \in \prep(t_1, e, n), \; &\wcet{p_1} \leq \eval(t_1, e, n) \\ 
    \forall p_2 \in \prep(t_2, e, n), \; &\wcet{p_2} \leq \eval(t_2, e, n)
  \end{align*}
  Any path $p \in \prep(t, e, n)$ is a permutation of some $p_1 \cat p_2$, hence 
  \[
    \wcet{p} \leq \eval(\eta_1, e, n) + \eval(\eta_2, e, n) = \eval(\eta^\prime, e, n)
  \]
  and this proves that $V(t^\prime, \eta^\prime)$ holds. From Lemma
  \ref{lm:V-annotation-on-root}, it follows that $V(t, \eta)$ also holds.
\end{proof}

\begin{lemma}\label{lm:eval-alt}
  Let $t = \Alt(t_1, t_2)$, $(l, \eta) = \gamma(t)$, and $(l_1, \eta_1) = \gamma(t_1)$, and $(l_2, \eta_2) = \gamma(t_2)$.
  Then, 
  \[
  \forall e,n \in \mathbb{N},\quad V(t_1, \eta_1) \wedge V(t_2, \eta_2) \implies V(t, \eta)
  \]
\end{lemma}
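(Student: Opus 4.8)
The proof closely mirrors that of Lemma~\ref{lm:eval-seq}. Let $t'$ be the tree obtained from $t$ by replacing the annotation on its root by the null annotation, and recall that $\omega(t') = \omega(t) = (l_1 \sqcap l_2,\ \eta_1 \uplus \eta_2)$; call this pair $(l',\eta')$. By Lemma~\ref{lm:V-annotation-on-root} it suffices to establish $V(t',\eta')$, that is, $\wcet{p} \le \eval(\eta', e, n)$ for every $e,n \in \mathbb{N}$ and every $p \in \prep(t', e, n)$.

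Fix such $e$, $n$ and $p$. By Definition~\ref{def:prep} we may write $p = p_1 \cat \ldots \cat p_n$ with each $p_i \in \tpaths{t'} = \tpaths{t_1} \cup \tpaths{t_2}$, using the definition of $\tpaths{}$ on the $\Alt$ node. Partition $\{1,\ldots,n\}$ according to whether $p_i$ is a path of $t_1$ or of $t_2$; let $n_1$ and $n_2 = n - n_1$ be the respective counts, and let $q$ (resp.\ $r$) be the concatenation, in order, of the $t_1$-pieces (resp.\ $t_2$-pieces). Since the leaves of $t_1$ and of $t_2$ carry disjoint labels (the renaming convention on duplicated blocks), we get $\wcet{p} = \wcet{q} + \wcet{r}$; moreover every occurrence inside $p$ of a path from $\tpaths{t''}$ with $t'' \in \annSet(t_1)$ sits entirely within the $t_1$-pieces, so deleting the $t_2$-pieces does not increase its count, and symmetrically for $t_2$. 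Finally, any loop $l$ named in an annotation of $\annSet(t_1)$ with $l \notin t_1$ is necessarily a proper ancestor of $t$ (because $t_1$ is a direct child of the $\Alt$ node $t$), hence also $l \notin t'$; thus the external-loop bounds inherited from $p \in \prep(t', e, n)$ are exactly what is needed to conclude $q \in \prep(t_1, e, n_1)$ and $r \in \prep(t_2, e, n_2)$.

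Applying the induction hypotheses $V(t_1,\eta_1)$ and $V(t_2,\eta_2)$ now yields $\wcet{q} \le \eval(\eta_1, e, n_1)$ and $\wcet{r} \le \eval(\eta_2, e, n_2)$, so $\wcet{p} \le \eval(\eta_1, e, n_1) + \eval(\eta_2, e, n_2)$. It remains to prove the purely combinatorial inequality
\[
  \eval(\eta_1, e, n_1) + \eval(\eta_2, e, n_2) \ \le\ \eval(\eta_1 \uplus \eta_2, e, n), \qquad n = n_1 + n_2 .
\]
For this I would first observe that multiplicity-scaling distributes through $\uplus$, i.e.\ $(\eta_1 \uplus \eta_2) \otimes e = (\eta_1 \otimes e) \uplus (\eta_2 \otimes e)$, since the floor $\max(\min_{\eta_1},\min_{\eta_2})$ defining $\uplus$ depends only on the values present, not on multiplicities. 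Hence the right-hand side is the sum of the $n$ greatest elements of $(\eta_1 \otimes e) \uplus (\eta_2 \otimes e)$. Next, because $\uplus$ differs from the ordinary multiset sum $\uplus_{trad}$ only by replacing low-ranked elements with the (larger) floor, we have $(\eta_1 \otimes e) \uplus (\eta_2 \otimes e) \ge (\eta_1 \otimes e) \uplus_{trad} (\eta_2 \otimes e)$ rank by rank; and for an ordinary multiset union it is the standard greedy fact that the sum of the $n$ greatest elements is at least (in fact the maximum over all splits $n_1 + n_2 = n$ of) the sum of the $n_1$ greatest of one part plus the $n_2$ greatest of the other. Chaining these gives the displayed inequality, hence $\wcet{p} \le \eval(\eta', e, n)$, proving $V(t',\eta')$ and therefore, by Lemma~\ref{lm:V-annotation-on-root}, $V(t,\eta)$.

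I expect the combinatorial step to be the only delicate part: one must take care that it is the modified operator $\uplus$ (with its implicit infinite multiplicity on the smallest element) that dominates $\uplus_{trad}$ rank by rank, not the reverse, and that $\otimes e$ genuinely commutes with $\uplus$; granted these, the ``$n$ largest of a union'' estimate is routine. A secondary point worth spelling out — and shared with the $\Seq$ case — is the verification that the $t_1$- and $t_2$-subcollections of $p$ really belong to $\prep(t_1, e, n_1)$ and $\prep(t_2, e, n_2)$, which hinges on the disjointness of leaf labels between the two $\Alt$ branches.
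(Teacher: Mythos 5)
Your proof is correct and follows essentially the same route as the paper's: strip the root annotation and reduce to $V(t',\eta')$ via Lemma~\ref{lm:V-annotation-on-root}, split any path of $\prep(t',e,n)$ into $t_1$- and $t_2$-pieces of sizes $n_1+n_2=n$, apply the induction hypotheses, and bound the result by the sum of the $n$ greatest elements of $(\eta_1\otimes e)\uplus(\eta_2\otimes e)$. The extra care you take (checking that the sub-collections really lie in $\prep(t_i,e,n_i)$, and justifying the merge inequality through $\uplus_{trad}$) only makes explicit what the paper leaves implicit.
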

\begin{proof}
  Let $t^\prime$ be the same tree as $t$ but without the annotation on
  the root node and let $(l^\prime, \eta^\prime) = \omega(t^\prime)$. 
  By definition of functions $\omega$ and $\gamma$, $\eta^\prime = \eta_1 \uplus \eta_2$. It follows that
  \[
  \eta^\prime \otimes e = (\eta_1 \uplus \eta_2) \otimes e = (\eta_1 \otimes e) \uplus (\eta_2 \otimes e).
  \]
	
  From the $n$ greatest elements of $\eta^\prime \otimes e$, we have $x$
  elements coming from $\eta_1 \otimes e$, and $y$ elements coming
  from $\eta_2 \otimes e$. We note that we can have several valid
  values of $x$ and $y$ if there are shared time values between $\eta_1$ and
  $\eta_2$.

  By definition, we have:
  \begin{align*}
    \eval(\eta^\prime, e, n) &=\sum_{i=0}^{n-1} (\eta^\prime \otimes e)[i] =
                     \sum_{i=0}^{n-1} ((\eta_1 \otimes e) \uplus (\eta_2 \otimes e))[i] \geq
                     \sum_{i=0}^{x-1} (\eta_1 \otimes e)[i] + \sum_{i=0}^{y-1} (\eta_2 \otimes e)[i]
  \end{align*}
  The last inequality is true for any choice of $x$ and $y$ such that
  $x+y=n$, because we pick the $x$ greatest elements from $\eta_1
  \otimes e$ and the $y$ greatest elements from $\eta_2 \otimes
  e$. Moreover, the sum of the $n$ greatest elements of $\eta^\prime \otimes
  e$ is never inferior to the sum of the $x$ greatest elements of
  $\eta_1 \otimes e$ and the $y$ greatest elements of $\eta_2 \otimes
  e$.
  
  Now, let $p_{max}$ be the worst-case path of $\prep(t^\prime, e,
  n)$. Because of the definition of $\prep$ on \Alt\ nodes, we can
  find $x$ and $y$ such that $p_1 \in \prep(t_1, e, x)$ and $p_2 \in
  \prep(t_2, e, y)$, and such that $p_{max}$ is a permutation of nodes
  from $p_1$ and $p_2$.  Obviously, we have $\wcet{p_{max}} = \wcet{p_1} +
  \wcet{p_2}$.
  Because of the induction hypothesis, we have $\wcet{p_1} \le
  \eval(\eta_1, e, x)$ and $\wcet{p_2} \le \eval(\eta_2, e,
  y)$. Therefore $\wcet{p_{max}} \le \eval(\eta^\prime, e, n)$, and
  this proves that $V(t^\prime, \eta^\prime)$ holds.  From Lemma
  \ref{lm:V-annotation-on-root}, it follows that $V(t, \eta)$ also
  holds.
\end{proof}

\begin{lemma}\label{lm:eval-loop}
  Let $t = \MyLoop(h, t_b, x_h, t_e)$, $(l, \eta) = \gamma(t)$, $(l_b,
  \eta_b) = \gamma(t_b)$, $(l_e, \eta_e) = \gamma(t_e)$.  Then:
  \[
  \forall e,n \in \mathbb{N},\quad V(t_b, \eta_b) \wedge V(t_e, \eta_e) \implies V(t, \eta)
  \]
\end{lemma}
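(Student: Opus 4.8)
The plan is to follow the same scheme as the proofs of Lemmas~\ref{lm:eval-seq} and~\ref{lm:eval-alt}. Let $t^\prime$ be $t$ with the root annotation replaced by the null annotation, and let $(l^\prime,\eta^\prime)=\omega(t^\prime)$; since $\omega$ ignores the annotation on the root, $\gamma(t^\prime)=\omega(t^\prime)$, so by Lemma~\ref{lm:V-annotation-on-root} it suffices to prove $V(t^\prime,\eta^\prime)$, i.e. that $\wcet{p}\le\eval(\eta^\prime,e,n)$ for all $e,n\in\mathbb{N}$ and all $p\in\prep(t^\prime,e,n)$. The argument then splits along the two branches in the definition of $\omega$ on a \MyLoop\ node: the case $l_b\equiv l_h$, where the body cost becomes a fixed per-entry constant and $\eta^\prime=(\{\sum_{i=0}^{x_h-1}\eta_b[i]\}\otimes\infty)\oplus\eta_e$, and the case $l_b\not\equiv l_h$, where $l_b$ is an outer loop and $\eta^\prime=\eta\oplus\eta_e$ with $\eta$ obtained from $\eta_b$ by summing consecutive ranks in blocks of $x_h$.

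In both cases I would first expand $\eval(\eta^\prime,e,n)$ with the multiset identities already used for \Seq\ and \Alt\ ($(\eta_1\oplus\eta_2)\otimes e=(\eta_1\otimes e)\oplus(\eta_2\otimes e)$ and $(\{k\}\otimes\infty)\otimes e=\{k\}\otimes\infty$), obtaining $\eval(\eta^\prime,e,n)=n\cdot(\sum_{i=0}^{x_h-1}\eta_b[i])+\eval(\eta_e,e,n)$ in the first case and $\eval(\eta^\prime,e,n)=\eval(\eta,e,n)+\eval(\eta_e,e,n)$ in the second. On the path side, by the inductive definitions of $\tpaths{\MyLoop(\cdot)}$ and of $\prep$, any $p\in\prep(t^\prime,e,n)$ is a concatenation of $n$ ``loop runs'', each run being $x_h$ body-paths from $\tpaths{t_b}$ followed by one exit-path from $\tpaths{t_e}$. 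As $\wcet{\cdot}$ is additive over concatenation and insensitive to the order of the concatenated segments, I can regroup and write $\wcet{p}=\wcet{p_b}+\wcet{p_e}$, where $p_b$ collects all $n\cdot x_h$ body-paths and $p_e$ all $n$ exit-paths; it then suffices to bound $\wcet{p_b}$ and $\wcet{p_e}$ by the two corresponding pieces of $\eval(\eta^\prime,e,n)$.

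The exit part is the easy half. The annotations of $\annSet(t_e)$ internal to $t_e$ are already enforced inside $\tpaths{t_e}$, and the remaining ones (including the root annotation of $t_e$) refer, by the definition of context annotations, to loops that strictly contain $l_h$ and are therefore external to $t^\prime$; their $\prep$-constraints for $p$ thus read $\occ(\tpaths{t^{\prime\prime}},p)\le e\cdot m$, and since the leaf sequence of $p_e$ is a sub-sequence of that of $p$ we get $p_e\in\prep(t_e,e,n)$, whence $\wcet{p_e}\le\eval(\eta_e,e,n)$ by the induction hypothesis $V(t_e,\eta_e)$. For the body part I would apply Lemma~\ref{lm:most-internal-loop} to $t_b$: every annotation of $\annSet(t_b)$ whose loop is external to $t_b$ refers to a loop containing $l_b$. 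In the case $l_b\not\equiv l_h$ such a loop strictly contains $l_h$ and is external to $t^\prime$, so its constraint for $p$ again reads $\le e\cdot m$ and descends to $p_b$, giving $p_b\in\prep(t_b,e,n\cdot x_h)$ and $\wcet{p_b}\le\eval(\eta_b,e,n\cdot x_h)$; in the case $l_b\equiv l_h$ the annotations on $l_h$ are instead enforced \emph{per loop run} by $\tpaths{t^\prime}$ (at most $m$ occurrences inside each of the $n$ runs), so the pertinent entry count for $t_b$ is $n$, yielding $\wcet{p_b}\le\eval(\eta_b,n,n\cdot x_h)=n\cdot(\sum_{i=0}^{x_h-1}\eta_b[i])$.

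I expect the main obstacle to be making the entry counts and the multiset chunking line up so that the body bound matches the body piece of $\eval(\eta^\prime,e,n)$ exactly, not just up to slack. The point is that $\eval(\eta_b,e,n\cdot x_h)$, the bound the naive induction returns, need not equal $\eval(\eta,e,n)$ once $e>1$, because $\otimes e$ does not commute with chunking unless each block is sorted; one must use that $\eta_b$ is sorted in decreasing order (so chunking preserves the order) together with monotonicity of $\eval$ and $\prep$ in the entry count, and in the case $l_b\equiv l_h$ one must carefully combine the per-run bound ``$\le n\cdot m$'' for the $l_h$-annotations with the ``$\le e\cdot m$'' bound of Definition~\ref{def:prep} for the strictly-outer annotations so that $p_b$ may legitimately be treated as a member of $\prep(t_b,n,n\cdot x_h)$. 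I would isolate the needed chunking and monotonicity facts as small auxiliary lemmas, prove the two cases of $V(t^\prime,\eta^\prime)$ from them together with $V(t_b,\eta_b)$ and $V(t_e,\eta_e)$, and conclude $V(t,\eta)$ via Lemma~\ref{lm:V-annotation-on-root}.
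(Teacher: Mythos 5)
Your proposal is correct and follows essentially the same route as the paper's proof: reduce to $V(t^\prime,\eta^\prime)$ via Lemma~\ref{lm:V-annotation-on-root}, split on $l_b\equiv l_h$ versus $l_b\not\equiv l_h$, decompose any $p\in\prep(t^\prime,e,n)$ into body and exit segments bounded by the induction hypotheses, and match the pieces against the expansion of $\eval(\eta^\prime,e,n)$. The chunking concern you flag is exactly what the paper resolves by invoking the assumption that $n$ is a multiple of $e$ (writing $n=k\cdot e$) and the decreasing ordering built into $\eta[i]$, yielding $\eval(\eta,e,n)=\eval(\eta_b,e,n\cdot x_h)$; and your bound $\eval(\eta_b,n,n\cdot x_h)$ in the $l_b\equiv l_h$ case is just the paper's $n\cdot\eval(\eta_b,1,x_h)$ with the $n$ runs grouped into one.
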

\begin{proof}
  Let $t^\prime$ be the same tree as $t$ but without the annotation on
  the root node and let $(l^\prime, \eta^\prime) = \omega(t^\prime)$.   

  If $l_b = l_h$, from the definition of $\omega$, it follows that the
  estimated time for one full execution of loop $l$ is constant. Let
  us name this constant $c = \eval(\eta_b, 1, x_h) =
  \sum_{i=0}^{x_h-1} \eta_b[i]$. By definition of $\gamma$ and
  $\omega$:
  \[
  \eval(\eta^\prime, e, n) = c n + \eval(\eta_e, e, n).
  \]

  For all $p \in \prep(t^\prime, e, n)$, $\exists p_1, \ldots, p_n \in
  \prep(t_b, 1, x_h)$ and $\exists p_e \in \prep(t_e, e, n)$, such
  that $p$ is a permutation of $p_1 \cat \ldots\cat p_n\cat p_e$.  We
  have $\wcet{p} = \wcet{p_1} + \ldots + \wcet{p_n} +
  \wcet{p_e}$. Also, $\forall k, \wcet{p_k} \le \eval(\eta_b, 1, x_h)$.
  Therefore,
  \[
  \wcet{p} \le n \cdot \eval(\eta_b, 1, x_h) + \eval(\eta_e, e, n) = \eval(\eta^\prime, e, n).
  \]

  Notice that we can rule out case $l_e = l_h$, by definition of context annotations.

  If $l_b \ne l_h$, then by definition of $\gamma$ and $\omega$
  functions, we have 
  \[
   \eta^\prime[i] = \sum_{j=i \cdot x_h}^{i \cdot x_h + x_h - 1}\eta_1[j].
  \]  
  We know that $n$ is a multiple of $e$. Let $n = k \cdot e$. We have:
  \begin{align*}
    \eval(\eta^\prime, e, n) &= \eval(\eta^\prime, 1, k) \cdot e  \\
      &= e \cdot \sum_{i=0}^{k -1} \sum_{j=i \cdot x_h}^{i \cdot x_h + x_h - 1}(\eta_b) [j] + \eval(\eta_e, e, n) = 
       e \cdot \sum_{i=0}^{k \cdot x_h - 1} \eta_b[i] + \eval(\eta_e, e, n)  \\
      &= e \cdot \eval(\eta_b, 1, k \cdot x_h) + \eval(\eta_e, e, n) = 
       \eval(\eta_b, e, n \cdot x_h) + \eval(\eta_e, e, n).
  \end{align*}

  Since $l_b \ne l_h$, and from Lemma \ref{lm:most-internal-loop}, we
  know that no annotation in $t$ refers to the current loop. 
  Therefore, for all $p \in \prep(t^\prime, e, n)$, $p$ can be expressed as
  the permutation of $p_b \cat p_e$, where paths $p_b \in \prep(t_b,
  e, n \cdot x_h)$ and $p_e \in \prep(t_e, e, n)$.
  Then:
  \[
   \wcet{p} = \wcet{p_b} + \wcet{p_e} \le \eval(\eta_b, e, n \cdot x_h) + \eval(\eta_e, e, n) = \eval(\eta^\prime, e, n).
  \]
  This proves that $V(t^\prime, \eta^\prime)$ holds.  From Lemma
  \ref{lm:V-annotation-on-root}, it follows that $V(t, \eta)$ also
  holds.
\end{proof}

We can now conclude on
the validity of our complete WCET evaluation method.

\begin{theorem}
  Let $G$ a CFG. Let $\mathcal{D}=DAG(G,\top)$ and let
  $t=\Pp(\mathcal{D}, \mathcal{D}_s, \mathcal{D}_e)$. Let
  $(l,\eta)=\gamma(t)$. We have: $\forall p\in\gpaths{G,G_e}, \wcet{p} \leq \eval(\eta,1,1)$
\end{theorem}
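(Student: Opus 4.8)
The plan is to obtain this theorem as an immediate consequence of the two pillars established so far: the path-inclusion theorem (Theorem~\ref{th:path-validity}), which guarantees $\gpaths{G,G_e}\subseteq\tpaths{t}$, and the soundness of the abstract WCET (Theorem~\ref{lm:eval-valid}), which guarantees that $\gamma(t)$ satisfies the validity predicate $V$. Concretely, I would fix an arbitrary $p\in\gpaths{G,G_e}$ and chase it through both statements.

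First I would invoke Theorem~\ref{th:path-validity} to get $p\in\tpaths{t}$. Next I would observe that, since $t$ is the tree of the whole program, $t$ is not contained in any loop; hence for every annotation $(t',l,m)\in\annSet(t)$ the loop $l$ lies inside $t$, so the side condition ``$l\notin t \implies \occ(\tpaths{t'},p)\le e\cdot m$'' in Definition~\ref{def:prep} holds vacuously. Consequently $\prep(t,1,1)=\tpaths{t}$, and in particular $p\in\prep(t,1,1)$. Finally I would apply Theorem~\ref{lm:eval-valid}, which gives $V(t,\gamma(t))$, and instantiate its universally quantified statement at $e=n=1$ and at the path $p$: this yields exactly $\wcet{p}\le\eval(\eta,1,1)$. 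Since $p$ was arbitrary, the theorem follows, and $\eval(\eta,1,1)$ is the value $\eta[1]$ that the algorithm returns as the program's WCET estimate.

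All of the difficulty is hidden in the two cited results. The path-inclusion part (Theorem~\ref{th:path-validity}) is handled in Appendix~\ref{app:CFG-CFT} via the DAG decomposition and Lemma~\ref{lm:dpath-in-tpath}. The soundness part (Theorem~\ref{lm:eval-valid}) is the real work: it is proved by structural induction on the control-flow tree, with base case $\Leaf(b)$ --- where $\eta=\{\wcet{b}\}\otimes\infty$ so $\eval(\eta,e,n)=n\cdot\wcet{b}$ bounds any concatenation of $n$ copies of $b$ --- and inductive cases $\Seq$, $\Alt$, $\MyLoop$ discharged by Lemmas~\ref{lm:eval-seq}, \ref{lm:eval-alt}, \ref{lm:eval-loop}, each combined with Lemma~\ref{lm:V-annotation-on-root} to absorb the annotation on the root. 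The only place requiring care in assembling the final theorem itself is the bookkeeping identity $\prep(t,1,1)=\tpaths{t}$: it relies on $t$ having no enclosing loop, so that the degenerate instantiation $e=n=1$ of the general predicate $V$ recovers precisely the bound we want rather than a weaker one.
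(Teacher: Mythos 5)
Your proposal is correct and follows exactly the paper's own route: the paper proves this theorem as an immediate consequence of Theorem~\ref{th:path-validity} and Theorem~\ref{lm:eval-valid}, and the bookkeeping identity $\prep(t,1,1)=\tpaths{t}$ you highlight is precisely the remark the paper makes right after Definition~\ref{def:prep}. You have simply made explicit the instantiation at $e=n=1$ that the paper leaves implicit.
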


\begin{proof}
  Consequence of Theorem~\ref{th:path-validity} and Theorem~\ref{lm:eval-valid}.
\end{proof}


\end{document}